\documentclass[11pt]{article}

\usepackage[margin=1in]{geometry}
\usepackage{times}
\usepackage{booktabs}
\usepackage{comment}
\usepackage{amssymb,amsmath,amsthm}
\usepackage{mathtools}
\usepackage{enumitem}
\usepackage{accents}
\usepackage[ruled,vlined,linesnumbered]{algorithm2e}
\usepackage{multirow}
\usepackage{mdframed}
\usepackage{datetime2}
\usepackage{subcaption}
\usepackage{nicefrac}
\usepackage{float}
\usepackage{hhline}
\usepackage{makecell}
\usepackage{wrapfig}
\usepackage{xspace}
\usepackage{anyfontsize}
\usepackage[colorlinks=true, citecolor=blue, linkcolor=blue, urlcolor=blue]{hyperref}
\usepackage{natbib}
\citestyle{authoryear}
\usepackage{setspace}

\newtheorem{assumption}{Assumption}

\newtheorem{proposition}{Proposition}
\newtheorem{remark}{Remark}
\theoremstyle{definition}
\newtheorem{defn}{Definition}
\newmdtheoremenv[font=\small,skipbelow=0.1pt,skipabove=11pt]{algoprocedure}{Procedure}
\newmdtheoremenv[font=\small,skipbelow=0.1pt,skipabove=11pt]{activityrule}{Activity Rule}
\newmdtheoremenv[font=\small,skipbelow=0.1pt,skipabove=11pt]{query}{Query Type}

\newcommand{\ConvergencePhase}{Convergence Phase\xspace}
\newcommand{\MRPAR}{MRPAR\xspace}
\newcommand{\IntervalReduction}{Interval Reduction\xspace}
\newcommand{\refMRPARActivityRule}{Activity Rule~\ref{act:mrpar}\xspace}
\newcommand{\refIntervalReductionActivityRule}{Activity Rule~\ref{act:convergence}\xspace}
\newcommand{\appref}[1]{Appendix \ref{#1}}

\SetCommentSty{mycommfont}
\SetKwComment{Comment}{$\triangleright$\hspace{-0.1cm} }{}
\SetAlFnt{\small}

\DeclareRobustCommand{\perturbed}[1]{\accentset{\circ}{#1}}
\setlist[description]{nosep,font=\normalfont\itshape}

\DeclareMathOperator*{\argmax}{arg\,max}
\newcommand{\Parameter}[1]{\textbf{#1}\\}

\title{\textbf{iMLCA: Machine Learning-powered Iterative Combinatorial Auctions with Interval Bidding}\thanks{Some of the ideas presented in this paper were also described in a one-page abstract published in the proceedings of the 22nd ACM Conference on Electronic Commerce (EC'21) \citep{BeyelerImlca2021}.}}

\author{
	Benjamin Lubin \\ Boston University \\ \texttt{blubin@bu.edu} 
	\and
	Manuel Beyeler \\ University of Zurich \\ \texttt{manuel.beyeler@uzh.ch}
	\and
	Gianluca Brero \\ Bryant University \\ \texttt{gbrero@bryant.edu}
	\and
	Sven Seuken \\ University of Zurich \\ \texttt{seuken@ifi.uzh.ch}
}

\date{}  

\begin{document}

\maketitle

\begin{abstract}
Preference elicitation is a major challenge in large combinatorial auctions because the bundle space grows exponentially in the number of items. Recent work has used machine learning (ML) algorithms to identify a small set of bundles to query from each bidder. However, a shortcoming of this prior work is that bidders must submit \textit{exact} values for the queried bundles, which can be quite costly. To address this, we propose \emph{iMLCA}, a new ML-powered iterative combinatorial auction with \textit{interval bidding} (i.e., where bidders submit upper and lower bounds instead of exact values). To steer the auction towards an efficient allocation, we introduce a price-based activity rule, asking bidders to tighten bounds on relevant bundles only. In our experiments, iMLCA achieves the same allocative efficiency as the prior ML-based auction that uses exact bidding. Moreover, it outperforms the well-known combinatorial clock auction in a realistically-sized domain.
\end{abstract}

\noindent\textbf{Keywords:} Combinatorial Auction, Machine Learning, Interval Bidding


\maketitle

\section{Introduction}

Combinatorial auctions (CAs) are used to allocate a set of heterogeneous items to multiple bidders who may see the items as substitutes or complements. In CAs, bidders can submit bid on \emph{bundles} rather than just individual items, which addresses problems such as \textit{exposure} and \textit{demand reduction} 
\citep{cramtonCAintro}.
CAs are widely used in real-world applications, including spectrum auctions \citep{cramton2013spectrum}, procurement \citep{sandholm2013very}, and TV ad allocation \citep{Goetzendorf:2015}.

Unfortunately, in CAs, the bundle space grows exponentially in the number of items, which makes it impractical for bidders to explore and report their full value function, even in medium-sized domains. Therefore, \emph{iterative combinatorial auctions (ICAs)} are often used in practice, where the auctioneer interacts with the bidders over the course of multiple rounds and only elicits a limited amount of information. A prime example is the \emph{combinatorial clock auction (CCA)}, which has been used in applications such as spectrum auctions \citep{cramton2013spectrum} and auctioning off the right to build offshore wind farms \citep{ausubel2011auction}, generating more than \$15 Billion in revenue since 2008 \citep{ausubel2017practical} .

To be practical, the CCA employs simple per-item prices, commonly referred to as \textit{linear prices}, during the first phase of the auction. Additionally, it restricts the number of bids that can be submitted in the second phase. For example, in the 2014 Canadian spectrum auction (with 98 goods), bidders could only submit up to 500 bids, which is a tiny fraction of the whole bundle space. But these simplifications are not without loss. As \citet{scheffel2012impact} and \citet{bichler2013core} have shown, bidders tend to focus on only a small number of bundles too early in the auction, which can lead to significant efficiency losses. This occurs primarily due to: 1) \textbf{omitted value:} bundles without submitted bids are disregarded in the final allocation, even if bidders assign them value; and 2) \textbf{coordination challenges:} bid compatibility between different bidders requires mutually exclusive item sets, a complex task when dealing with multiple items.

\subsection{ML-powered ICAs with Exact Bidding}

To address these problems with the CCA, \citet{brero2018combinatorial,brero2019machine} introduced \emph{MLCA}, a \emph{machine learning (ML)-powered iterative combinatorial auction}. MLCA drives the elicitation process via an \textit{ML-powered query module} (instead of using prices). This query module learns bidders’ value functions to identify which value queries to ask in each round. These learned value functions assign an inferred value for each possible bundle, thus mitigating the omitted value problems of the CCA. Furthermore, the value queries asked by MLCA are designed to ensure compatibility among bids, addressing the coordination challenges of the CCA. Empirical evidence presented by \citet{brero2019machine} confirms that the MLCA outperforms the CCA in achieving higher efficiency in a large CA domain. 

A shortcoming of MLCA is that it requires bidders to answer each query with an \textit{exact} value. In many applications, this is difficult for bidders, because determining the exact value of even a \emph{single} bundle typically involves a complex decision-making problem \citep{parkes2006}. For example, in spectrum auctions, a bundle of licenses corresponds to a mobile network operator being able to execute on a business plan (like offering a national cell plan). Thus, determining the value of a bundle of licenses is a multi-year profit-optimization problem, which requires time and resources to solve. Furthermore, submitting a bid close to a bidder's true value for a bundle (i.e., their maximum willingness-to-pay) often requires board approval, which involves a costly process \citep{Bichler2018PrincipalAgent}. In some auctions (e.g., for oil drilling rights), it may even be practically impossible to determine exact values, because there is inherent uncertainty about the value of the resources that cannot be resolved before the auction \citep{oren1975competitive}.
Requiring exact values is also costly from a computational perspective. While large-scale profit-optimization problems can be solved via mixed integer programming, this is a strongly NP-hard problem \citep{Garey1978StrongNPHardness}; thus, no FPTAS exists. However, anytime primal-dual algorithms that produce upper and lower bounds are available, indicating that computing bounds is generally easier than determining exact values.

\subsection{Our Contribution: iMLCA}

To address the difficulty of reporting exact values, we propose \emph{iMLCA}, a new ML-powered combinatorial auction \textit{with interval bidding}. iMLCA asks bidders to report upper and lower bounds on their values (i.e., intervals).  For our design, we take the MLCA mechanism by \citet{brero2019machine} as the starting point but we extensively modify and extend it to  handle the new interval queries. While interval bidding simplifies the interaction with the auction, we must take great care in order to retain high efficiency. Accordingly, a significant aspect of our contribution is to combine an ML-based approach with interval-based bidding into a single coherent design.

The main challenge is that bidders' reported intervals will often be overlapping, such that one cannot determine the efficient allocation nor reasonable payments.  Consequently, iMLCA must force bidders to successively \textit{tighten} their reported intervals, but without requiring bounds to be tightened so much that the benefits of interval bidding are obviated. To this end, we design a \emph{price-based activity rule}, extending earlier work on refinement processes by \citet{Lubin.2008}, which makes for an easily-understood bidder interaction. Using prices in the activity rule implies that quoting the \emph{right} prices is of utmost importance, such that bound tightening only happens where needed. Accordingly, one of our technical contributions is a new algorithm for generating approximate linear clearing prices.\footnote{Note that we only use linear prices to drive the bound refinement process, while the  elicitation (i.e., deciding which queries to ask) is driven by an ML algorithm. This is in contrast to the CCA, where linear prices are used to drive the elicitation process (see our discussion in Section~\ref{sec:discussion} for details).} We carefully integrate our new price-based activity rule with the ML-powered query module designed by \citet{brero2019machine}, such that they can be executed in the same step in each auction round (phase 2 of iMLCA; see Section~\ref{sec:MLAndPreferenceRefinement}).

To ensure that the bounds are eventually tight enough to determine the efficient allocation (given reports), we design an additional \emph{\ConvergencePhase} (phase 3 of iMLCA; see Section~\ref{sec:ConvergencePhase}). In this phase, iMLCA asks bidders to tighten their bounds on a few bundles, until a convergence criterion is met. By setting this  criterion, the auctioneer can select a trade-off between mitigating bidders' incentives to manipulate and bidders' elicitation efforts. Ultimately, the final allocation and prices are determined at the lower bounds only (Section~\ref{subsec:outcomeDetermination}), as bidders are guaranteed to be willing to pay their lower bounds.

In Section~\ref{sec:theoretical analysis}, we study the theoretical properties of iMLCA. Given our design, it is straightforward to prove that iMLCA satisfies \emph{individual rationality} and \emph{no-deficit}. In contrast, the \emph{incentive analysis} requires more care. While iMLCA (like the CCA and MLCA) is not strategyproof, we explain how our design choices for iMLCA lead to good incentives in practice. 

In Section~\ref{sec:experiments}, we provide a detailed experimental evaluation of iMLCA using the spectrum auction test suite (SATS) \citep{weiss2017sats}.  We show that iMLCA achieves the same efficiency as MLCA, but with much lower elicitation cost. Furthermore, we show that iMLCA outperforms the CCA in terms of efficiency in a realistically-sized CA domain. Regarding incentives, we provide experimental evidence showing that bidders cannot benefit by misreporting their bounds. Finally, in  Section~\ref{sec:discussion}, we discuss limitations of our approach as well as potential extensions.

\section{Related Work}
Our work lies in the broad area of research on preference elicitation algorithms (see \citet{sandholm2006preference} for a survey). \citet{lahaie2004applying} and \citet{blum2004preference} were the first to use ML to improve preference elicitation. They showed that for certain classes of valuations, ML can be used to design tractable elicitation algorithms.
Later work by \citet{brero2018bayesian} and \citet{brero2019fast} follows a design paradigm that is similar to ours (i.e., integrating an ML algorithm into the auction itself). They design a Bayesian price-based mechanism, where the main goal is to improve the speed of an iterative CA until it converges to clearing prices. Similarly, \citet{Shen2019LearningToClear} also used ML to learn clearing prices, but in an ad auction context instead of an iterative CA setting. In contrast to these works, we do not aim to directly learn clearing prices (which may not exist in our setting); instead, we use ML to learn the bidders' value functions, and we use this to steer the iterative query process.

Recently, multiple variations of MLCA have been proposed. \citet{weissteiner2020deep} and \citet{Weissteiner2021FourierCAs} demonstrated that allocative efficiency can be further increased by replacing support vector regression (SVR) with neural networks (NNs) and by using value inference via Fourier Analysis, respectively. \citet{weissteiner2022monotone} and \citet{weissteiner2023bayesian} expanded on the use of neural networks by designing ad hoc architectures and modeling uncertainty over values for unelicited bundles to enhance exploration. \citet{estermann2023deep} proposed query strategies to improve the elicitation of the bundle bids on which the machine learning model is first trained. \citet{soumalias2024machine} and \citet{soumalias2022machine} extended the MLCA query module to include \textit{demand queries}, which are commonly used in iterative auctions and \textit{pairwise comparisons}. These comparisons are particularly useful in domains like course allocation, where universities assign course schedules to students. Our work contributes to this line of research by extending the core MLCA design to ask bidders to report bounds instead of exact values on bundles, simplifying the bidding process.

There is also related work using ML for \textit{automated mechanism design} \citep{dutting2015payment,dutting2019optimal,golowich2018deep}. This work has focused on learning allocation and payment rules, so that the resulting mechanism achieves high efficiency or revenue and is (approximately) truthful. Recently, \citet{Brero2021RLForIndirectMechanisms} used reinforcement learning to learn optimal \emph{indirect} mechanisms within the restricted class of sequential price mechanisms. All of these works use ML to learn the mechanism; but in contrast to iMLCA, the final (learned) mechanism does not use ML when executed.

\citet{Lubin2018ComputationalSearch} studied the automated search for good core-selecting mechanisms.
The authors incorporated an equilibrium solver into their search algorithm, such that they could algorithmically decide how close to strategyproof different mechanisms are. Unfortunately, their approach does not yet scale to large settings nor to iterative auction design problems. \citet{Tang2017ReinforcementMD} proposed a similar approach for optimizing mechanisms in dynamic environments like ad auctions. Instead of employing an equilibrium solver, they use ML to model agents' behavior inside the optimization.

\section{Preliminaries}

In this section, we present our formal model and provide a summary of the MLCA mechanism.

\subsection{Formal Model}
Combinatorial auctions (CAs) allocate $m$ indivisible items among $n$ bidders. We let $M = \{1,...,m\}$ denote the set of items and $N = \{1,...,n\}$ the set of bidders. Bundles of items are represented by an indicator vector $x \in \mathcal{X} = \{0,1\}^m$, i.e., $x_j = 1$ iff item $j \in M$ is contained in bundle $x$. Each bidder $i$'s preferences are captured by a value function denoted $v_i: \mathcal{X} \rightarrow \mathbb{R}_{\geq 0}$. We let $v = (v_1,...,v_n)$ denote the \emph{value profile} of all bidders. Without loss of generality, we assume that the value functions are normalized such that the value of the empty bundle is zero; importantly, we impose no further structural restrictions. 

We also refer to the set of all bidders $N$ as the \emph{main economy}. For this economy, an allocation is denoted by $a = (a_1,...,a_n) \in \mathcal{X}^n$, where $a_i$ is the bundle allocated to bidder $i$. We denote the set of \textit{feasible} allocations by $\mathcal{F}=\left\{a \in \mathcal{X}^n:\sum_{i \in N}a_{ij} \le 1, \,\,\forall j \in M\right\}$.
Sometimes, we need to work with a \emph{marginal} economy, where a single bidder $i$ is omitted.  In this case, we denote the allocation in the marginal economy by $a^{-i}=(a_1^{-i},\dots,a_{i-1}^{-i},a_{i+1}^{-i},\dots,a_n^{-i})$, where the superscript ``$-i$'' indicates which bidder has been excluded.

A bidder $i$ may make non-truthful value reports $\hat{v}_i$ to the mechanism, giving rise to value profiles other than the true one. For any profile $\hat{v}$, we let $\hat{v}(a)=\sum_i \hat{v}_i(a_i)$ denote the total reported value of allocation $a$. When the true value profile is used, this quantity is called the \emph{social welfare} of the allocation. An allocation that maximizes the social welfare is denoted $a^* \in \argmax_{a \in \mathcal{F}} v(a)$. We let $p_i$ denote the payment charged to bidder $i$, and we let $p = (p_1,...,p_n)$ denote the payment profile for all bidders. We assume quasi-linear utilities of the form $u_i(a,p) = v_i(a_i) - p_i$. 

For iMLCA, we introduce an \emph{interval query} as a generalization of a \emph{value query}, but where reports specify upper and lower bounds rather than exact values.\footnote{Our interval query is similar to that in the \textit{tree-based bidding language} \citep{cavallo2005tbbl} but simpler, in that bidders need to specify bounds on only a single bundle rather than a concisely represented set of bundles.} We denote the $k$\textsuperscript{th} bundle-value report of bidder $i$ by $(x_{ik},\underline{v}_{ik},\overline{v}_{ik})$, where $\underline{v}_{ik}$ is the lower bound and $\overline{v}_{ik}$ the upper bound reported for bundle $x_{ik}$. Note that we do not use the standard $\hat{v}$ notation here to simplify notation and because intervals are always reports so there is no opportunity for confusion.  We denote the set of all \textit{bundle-value reports} of bidder $i$ as $R_i$; the profile of all reports is $R = (R_1,...,R_n)$. To simplify notation, we say that $x \in R_i$ if there exists $k : (x_{ik},\underline{v}_{ik},\overline{v}_{ik})\in R_i$ and $x_{ik} = x$. %
For convenience and by slightly overloading notation, we denote by $\underline{v}_i(\cdot)$ and $\overline{v}_i(\cdot)$ the \emph{upper and lower reported value functions} for each bidder $i$ respectively; note that these functions have support on $R_i$ only and are elsewhere undefined. We denote the set of feasible allocations based on reports $R$ as $\mathcal{F}_R = \{a\in\mathcal{F} : a_i \in R_i \; \forall i\}$.

We will also need the following valuation function in the course of defining our mechanism:

\begin{defn}[\textsc{Perturbed Valuation} \citep{Lubin.2008}]
Bidder $i$'s \emph{perturbed valuation function} with respect to a given allocation $a$, defined over bundles $x \in R_i$, is given by:
\begin{equation}
\perturbed{v}_i(x|a) = \begin{cases}
\underline{v}_i(x) & \text{if} \; x = a_i \\
\overline{v}_i(x) & \text{otherwise}
\end{cases}
\end{equation}
\end{defn}
\noindent In words, the perturbed valuation enables us to capture the worst case loss from allocating bundle $a_i$ to bidder $i$ instead of any other bundle $x$.  We write $\perturbed{v}_i(x|a)$ as $\perturbed{v}_i(x)$ where $a$ is clear from context.

We use linear prices in our refinement processes, which we denote by $\pi \in \mathbb{R}^m_{\geq 0}$. \textit{Clearing prices} are prices such that demand meets supply, yielding a feasible allocation $a$ with $v_i(a_i) - \pi(a_i) \geq v_i(x) - \pi(x) \; \forall i\in N, x \in \mathcal{X}$ (demand) and $\pi(a) \geq \pi(a') \; \forall a' \in \mathcal{F}$ (supply). 
Together, the clearing prices and such an allocation form a \emph{competitive equilibrium}.


\subsection{MLCA}

\begin{figure}[tb]
\centering
\begin{minipage}[t]{.5\columnwidth}
\centering
\includegraphics[width=1\columnwidth]{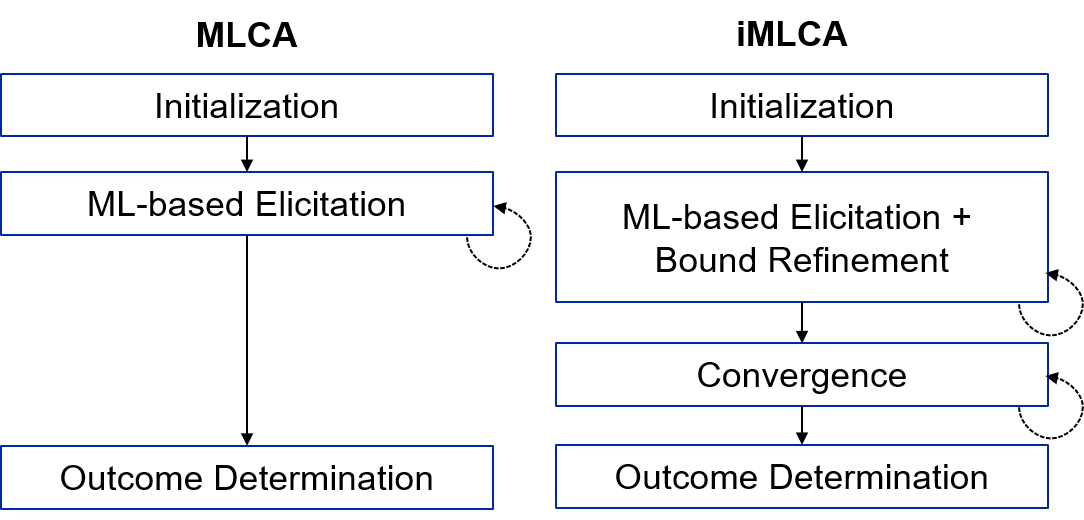}
\caption{Auction phases of MLCA and iMLCA}
\label{fig:flowchart}
\end{minipage}
\end{figure}

The \emph{ML-powered ICA (MLCA)} by \citet{brero2019machine} has the three phases shown in Figure~\ref{fig:flowchart}: 
\begin{enumerate}[itemsep=0px]
\item The \emph{initialization phase}, where random bundles are queried generating input for the ML. 
\item The iterative \emph{ML-based elicitation phase}, where an ML-powered query module is used to generate new queries every round. 
\item The \emph{outcome determination phase}, where the outcome (final allocation and payments) is computed based solely on reported bundle-value pairs.
\end{enumerate}
The core of MLCA is the \textit{ML-powered query module} used in phase 2, which, in every round of the auction, aims to generate a query profile that forms a feasible allocation; such bundles will likely be useful when computing the final allocation. Concretely, the query module uses an ML algorithm to \emph{generalize} from the bundle-value pairs that have already been reported to predict values for all unreported bundles. It then computes the efficient allocation at these learned valuations and queries bidders for their value at this allocation. If the ML algorithm is sufficiently accurate, then the queried allocation will be highly efficient; otherwise, the bidder can correct the ML via the issued query. This way, MLCA proceeds towards a more efficient outcome, assuming a sufficiently effective ML algorithm and truthful bidding. For details, we refer the reader to \cite{brero2019machine}.

\section{ML-powered Combinatorial Auction with Interval Bidding}
\label{sec:intervalquery}

\let\oldnl\nl
\newcommand{\nonl}{\renewcommand{\nl}{\let\nl\oldnl}}

\begin{algorithm}[tb]
\caption{iMLCA}
\DontPrintSemicolon
\label{alg:imlca}
\SetAlgoLined
\Parameter{
max queries $Q\textsuperscript{max}$; 
initial queries $Q\textsuperscript{init}$; 
final allocation interval size limit $\varepsilon\textsuperscript{stop}$;}
\nonl\hrulefill\\
\lForEach(\Comment*[f]{\textcolor{blue}{Initialization Phase}}){$i \in I$\label{alg_line:init_start}}{query $Q\textsuperscript{init}$ random bundles}\label{alg_line:init_end}
\nonl\hrulefill\\
\While(\Comment*[f]{\textcolor{blue}{ML + Bound Refinement Phase}}){$|R_1| \leq Q\textsuperscript{max}$\label{alg_line:ml_start}}{
\ForEach{$i \in I$}{
	Generate  main economy queries for efficiency
	\label{alg_line:mainEcon}
	\Comment*[r]{Alg. 3, \citealt{brero2019machine}}
    Generate marginal econ. queries for incentives \& revenue 
    \label{alg_line:marginalEcon}
	\Comment*[r]{Alg. 3, \citealt{brero2019machine}}
    Query generated bundles and apply \emph{\MRPAR} activity rule 
    \label{alg_line:mrpar}
    \Comment*[r]{Activity Rule~\ref{act:mrpar} using}
    \Comment*[f]{Procedures~\ref{proc:provisonal_allocation} and~\ref{proc:price_new}}
    }
}\label{alg_line:ml_end}
\nonl\hrulefill\\
$\varepsilon = \varepsilon\textsuperscript{stop}$ 
\Comment*[r]{\textcolor{blue}{\ConvergencePhase}}
\While(\Comment*[f]{Procedure~\ref{proc:convergence_achieved}})
{$\neg$ \textup{Convergence Stopping Rule}\label{alg_line:ref_start}}{
	\ForEach{$i \in I$}
	{
	Generate  queries in main economy for convergence  \Comment*[r]{Algorithm~\ref{alg:convergence_bundles}}
	Query these bundles, apply \emph{\IntervalReduction} activity rule using $\varepsilon$ \Comment*[r]{Activity Rule~\ref{act:convergence}}
	}
	$\varepsilon = \varepsilon/2$;\\
}\label{alg_line:ref_end}
\nonl\hrulefill\\
$\underline{a} \in \argmax_{a\in\mathcal{F}_R}\sum_{i\in N}\underline{v}_i\left(a_i\right)$ \label{alg_line:out_start}
\Comment*[r]{\textcolor{blue}{Outcome Determination}}
\lForEach{$i\in N$}{$p_i =  \sum_{j\neq i} \underline{v}_j\left(\underline{a}_j^{-i}\right) - \sum_{j\neq i} \underline{v}_j\left(\underline{a}_j\right)$,
where 
$a\textsuperscript{-i} \in \argmax_{a\in\mathcal{F}_R}\sum_{j\neq i}\underline{v}_j\left(a_j\right)$
\label{alg_line:out_end}}
\KwRet{$\underline{a}$, $p$}\;
\end{algorithm}

In this section, we introduce our new ML-powered ICA with interval bidding (iMLCA).  From a design perspective, iMLCA builds upon the MLCA design and shares some common components; but to facilitate interval bidding, significant changes are necessary.  Figure \ref{fig:flowchart} depicts the differences between the mechanisms.  Specifically, iMLCA enhances the ML phase by adding a bound refinement component that requires bidders to tighten relevant bounds as the auction progresses. Further, iMLCA includes an entirely new third phase that forces further narrowing of the bounds to guarantee convergence (i.e., ensuring that bounds are tight enough such that the efficient allocation at reports can be determined). The overall iMLCA mechanism is provided in Algorithm~\ref{alg:imlca}. Before describing it in detail in the following subsections, we provide a high-level overview:
\begin{enumerate}[itemsep=0px]
\item The \emph{initialization phase} (Line \ref{alg_line:init_start}; described in Section~\ref{subsec:init}) queries random bundles to generate preliminary input for the ML algorithm.

\item The \emph{ML-based elicitation and bound refinement phase} (Lines \ref{alg_line:ml_start}-\ref{alg_line:ml_end}; described in Section~\ref{sec:MLAndPreferenceRefinement}) serves as the primary iterative elicitation phase of the mechanism. Similar to MLCA, the auctioneer iteratively presents new bundles to bidders using an ML-powered query module. However, the interval bids employed in iMLCA do not typically reveal to the auctioneer sufficient information to determine an efficient allocation based on the reported bundles. To address this, we additionally ask bidders to refine their bounds using the \MRPAR activity rule. This rule targets refinement based on approximate market clearing prices, thus focusing the bidders' effort in revealing their preferences on bundles that are likely to clear the market (i.e., be part of an efficient allocation) according to the current reported bounds.

\item The \emph{convergence phase} (Lines \ref{alg_line:ref_start}-\ref{alg_line:ref_end}; described in Section~\ref{sec:ConvergencePhase}) asks bidders to further refine their bounds guided by the \IntervalReduction activity rule. 
The rule guarantees that the auctioneer can determine an efficient final allocation at the reports, which MRPAR is unable to do on its own because its prices may only be approximately clearing.  This phase is typically only responsible for a modest amount of refinement, but is important in providing the theoretical guarantees of the mechanism. 

\item The \emph{outcome determination phase} (Lines~\ref{alg_line:out_start}-\ref{alg_line:out_end}; described in Section~\ref{subsec:outcomeDetermination}) computes the final allocation and payments.
\end{enumerate}

We now describe each of the four phases in detail.

\subsection{Initialization}
\label{subsec:init}

In this phase, each bidder is queried $Q\textsuperscript{init}$ (e.g., 50) bundles selected uniformly at random from the complete bundle space. In contrast to MLCA, the bidders answer these queries using upper and lower bounds. As in MLCA, the resulting reports are used as initial training data for the ML algorithm. We do not impose activity rules in the initialization phase; thus, bidders could initially report wide intervals, but at the cost of being forced to refine more dramatically in later rounds.  

\subsection{Machine Learning-based Elicitation and Bound Refinement}
\label{sec:MLAndPreferenceRefinement}

In this phase, the bidders are iteratively asked in rounds to both provide new bounds on additional bundles, and to refine their bounds on previously reported bundles.  
While notionally simple, there are several important details in realizing this design. Before diving into the details, we provide an overview of the four steps we will go through.

First, we will describe how queries for new bundles are generated, using an ML-based technique from \citet{brero2019machine}, but modified to use interval queries (Section~\ref{subsec:genNewQueries}).  Next, we will describe the \MRPAR rule which is used to force bidders to narrow their bounds in each round (Section~\ref{subsec:boundRefinement}).  The \MRPAR rule requires bidders to respond to a \emph{provisional allocation}, the determination of which we describe next (Section~\ref{subsec:provAlloc}). 
Finally, \MRPAR also uses approximate clearing prices at reports in forcing revelation, essential for its economic motivation; we introduce a new method for selecting such prices that are better targeted at driving effective elicitation (Section~\ref{subsec:generatingLinearPrices}).

\subsubsection{Generating new Queries}
\label{subsec:genNewQueries}

iMLCA seeks to add bundles for both the \emph{main} and the \emph{marginal} economies (Lines \ref{alg_line:mainEcon} and \ref{alg_line:marginalEcon} of Algorithm~\ref{alg:imlca} respectively). 
In both economies, iMLCA requires bidders to respond to queries in each round, and it uses ML to inform which bundles to query.  
The actual selection of bundles via ML is achieved via the \emph{ML-powered query module} of MLCA \citep[Algorithm 3]{brero2019machine}, but in iMLCA, these queries are \emph{interval queries} instead of the standard value queries used in MLCA. Switching for interval queries means that the ML algorithm used in the query module must generalize from \emph{bounds} on each data point in its training data, rather than exact values. iMLCA works with any ML algorithm capable of doing this.  

We adopt support vector regressions (SVRs) as our ML algorithm, as they can be modified to use bounds instead of exact values in their training data.  Specifically, the standard \emph{$\varepsilon$-insensitive hinge loss} function used in SVRs only penalizes the learner for predictions more than $\varepsilon$ away from the exact training data.  It is mathematically straightforward to replace this $2\varepsilon$-wide insensitivity region from the standard formulation with a data point-specific region capturing the bounds (see \citet{brero2018combinatorial}).  Thus, in our usage, the SVR is only penalized for predictions outside of $[\underline{v},\overline{v}]$.   

\subsubsection{Bound Refinement}
\label{subsec:boundRefinement}

In every round, iMLCA not only asks bidders about \textit{new} bundles but also asks them to \textit{tighten} their bounds on already-queried bundles $R_i$, according to an \textit{activity rule} that balances elicitation against bidder effort (Algorithm~\ref{alg:imlca}, Line~\ref{alg_line:mrpar}).  Tighter bounds serve two purposes: first, they aid the ML algorithm in making better predictions, which improves the selection of new bundles to query. Second, given the price-based method used to drive them, they facilitate progress towards finding the most efficient allocation among those bundles that have been queried. We next provide details on this latter aspect and the activity rule that enables it.

Under the \emph{Modified Revealed Preference Activity Rule (\MRPAR)} \citep{Lubin.2008} bidders are provided both their portion of a \emph{provisional allocation} $a_i^\alpha$ (Procedure~\ref{proc:provisonal_allocation} below), and a set of prices $\pi$ (Procedure~\ref{proc:price_new} below).  As the allocation is provided by the auctioneer, its feasibility can be enforced. Bidders must then make clear that the provisional bundle is their weakly preferred bundle at the quoted prices, or alternatively that some other bundle is strictly preferred.  
Formally, we have:
\begin{activityrule}%
\begin{samepage}
\vspace{-.75ex}
{\textsc{Modified Revealed Preference Activity Rule (\MRPAR) \citep{Lubin.2008}:}}\label{act:mrpar} \\
Given a provisional allocation $a^\alpha$, linear prices $\pi$, and existing reports $R_i$, each bidder $i$ must submit reports revealing their preferred bundle $x^\ast_i$ by satisfying the following:
\begin{equation}
\exists x^{\ast}_i \in R_i \ \text{s.t.}\ 
\underline{v}_i(x^{\ast}_i) - \pi(x^{\ast}_i) \geq  \overline{v}_i(x_i) - \pi(x_i) \forall x_i \in R_i \setminus \{x^{\ast}_i\}  
\label{eq:reveal}
\end{equation}
\noindent where the inequality in Equation~\eqref{eq:reveal} must be strict if $x^{\ast}_i \neq a^\alpha_i$.
\label{eq:mrpar-second}
\vspace{-.75ex}
\end{samepage}
\end{activityrule}

\refMRPARActivityRule sits within a class of activity rules that employ the principle of \emph{revealed preference} (see \citet{ausubel2017practical} for details).  In general, such rules work by forcing bidders to reveal their preferred bundle at quoted prices, and then the mechanism updates prices in a direction that helps supply meet demand.  Through this movement, the  mechanism will eventually obtain clearing prices and a corresponding allocation that form a competitive equilibrium. When this happens, a special case of the First Welfare Theorem \citep[16.C-D]{mas1995microeconomic} provides that the allocation will be efficient.
Exact linear clearing prices may not exist in a combinatorial setting \citep{bikhchandani2002package}; thus, iMLCA may have only approximate clearing prices to work with.  Fortunately, \citet[][Lemma 1]{Lubin.2008} have anticipated this and shown that under approximate clearing prices we still obtain approximate efficiency.%
\footnote{The lemma is proven in a context where bidders state their full valuation function using a compact bidding language.  By contrast, in our setting, the quoted efficiency is only with respect to the bundles under consideration, $R$.}
This highlights the importance of finding good approximately clearing prices $\pi$
(Section~\ref{subsec:generatingLinearPrices}, below).

While satisfying Equation~\ref{eq:reveal} may seem complicated, Figure~\ref{fig:mrpar} illustrates that the intuition behind the rule, moving bounds to identify the most preferred bundle at given prices, is easily visualized. Practical implementations of iMLCA would include bidding interfaces that provide such a visualization as an aid to bidders.  We emphasize that, when responding to \refMRPARActivityRule, bidders must only consider the small subset of bundles under consideration $R_i$, not the set of all possible bundles.

\begin{figure}[t]
\centering
\includegraphics[width=0.45\columnwidth]{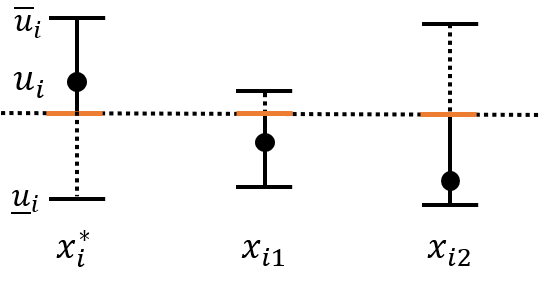}
\caption{
\label{fig:mrpar}
\small
Illustration of \refMRPARActivityRule. We see the scenario before refinement in black. The upper bars are the upper bound utility $\overline{u}_i(x) = \overline{v}_i(x)-\pi(x)$, the lower bars are the lower bound utility $\underline{u} = \underline{v}_i(x) - \pi(x)$ and the dots are the true utility $u_i(x)$ induced by prices $\pi$. A bidder must refine their bounds such that the lower bound utility on their preferred bundle $x^{\ast}$ is at the same level as the upper bound utility on all their other bundles (illustrated in orange).
}
\end{figure}

\subsubsection{Generating a Provisional Allocation} 
\label{subsec:provAlloc}
To instantiate \refMRPARActivityRule, we must first select a
\emph{provisional allocation}.  Following \citet{Lubin.2008}, we choose the allocation by identifying a value profile that is an affine combination of the existing upper and lower bounds. 

\begin{algoprocedure}%
\begin{samepage}
\textsc{Generate Provisional Allocation \citep{Lubin.2008}}\label{proc:provisonal_allocation}\ \\
For a chosen $\alpha \in [0,1]$, we compute a provisional valuation 
\begin{equation}
v_i^\alpha(x) = \alpha\underline{v}_i(x)+(1-\alpha)\overline{v}_i(x) \quad \forall i \in N, x \in R_i
\end{equation}
and a provisional allocation 
\begin{equation}
a^\alpha \in \argmax_{a\in\mathcal{F}} \sum_{i\in N}v_i^\alpha(a_i).
\end{equation}
\end{samepage}
\end{algoprocedure}
As in \citet{Lubin.2008}, we set $\alpha = \max\{0.5,\omega\}$, where $\omega$ is a convergence bound we will introduce in Section~\ref{sec:ConvergencePhase}.  As iMLCA proceeds, the provisional allocation moves towards the lower bounds (which are the ones ultimately implemented).

\subsubsection{Generating Linear Prices}
\label{subsec:generatingLinearPrices}
\refMRPARActivityRule requires linear prices.  Ideally, these would be \emph{clearing prices} with respect to the reported values.  However, in a CA, linear clearing prices may not exist.  We therefore seek \textit{$\delta$-approximate linear clearing prices} $\pi$ (at the given reports).  

\begin{defn}[\textsc{$\delta$-approximate Clearing Prices At Reports}]
\label{defn:delta_prices}
Given reported valuation profile $\hat{v}$ and an allocation $a$, prices $\pi$ are \emph{$\delta$-approximate clearing prices at reports} if 
$\delta \geq \delta_{ik} \forall i,k$,
\begin{align}
    &\hat{v}_i(a_i) - \pi(a_i) + \delta_{ik} \geq \hat{v}_i(x_{ik}) - \pi(x_{ik}) \forall i,k, \quad \text{and}
    \label{eq:delta_constraint1} \\
    &\pi_j = 0 \quad \forall j \in M : \nexists i \in N : a_{ij} = 1.
    \label{eq:zero_prices1}
\end{align}
\end{defn}
Equation~\eqref{eq:delta_constraint1} ensures that the clearing condition is violated by no more than $\delta_{ik}$ for any reported bundle $x_{ik}$, and $\delta$ is defined to bound all such $\delta_{ik}$.
Equation~\eqref{eq:zero_prices1} sets the price of any un-allocated item to zero, which ensures that the supply condition for clearing prices is met.
Because it may easily be that  $\delta_{ik} \neq \delta$ for many $k$, many price vectors may meet Definition~\ref{defn:delta_prices}.  We first break ties in favor of minimizing the number of positive $\delta_{ik}$ (see Procedure~\ref{proc:prices} in \appref{app:deltaApproxClearing}).
We then break ties further, by selecting a price vector that reduces the number of bundles a bidder must consider in \refMRPARActivityRule (as we will show formally in Section~\ref{sec:theoretical analysis}).  Prices are thus selected as follows:

\begin{algoprocedure}
\textsc{Generate prices}
\label{proc:price_new}
\begin{enumerate}[label=\roman*.]
\item Generate $\delta$-approximate clearing prices at reports with respect to the provisional valuation $v^\alpha$ and allocation $a^\alpha$; let $\delta_{ik}^\alpha$ be the required approximation (see Procedure~\ref{proc:prices} in \appref{app:deltaApproxClearing}). \label{proc_line:e1} 
\item Minimize over all $\delta_{ik}^\alpha > 0$ according to the objective $||\delta^\alpha_{ik}||_2$, while keeping constraints generated in step~\ref{proc_line:e1}
\item Generate $\delta$-approximate clearing prices at reports with respect to the provisional allocation $a^\alpha$ and the perturbed valuation $\perturbed{v}$ (defined with respect to $a^\alpha$) while keeping all previously generated constraints; let $\perturbed{\delta}_{ik}$ be the required approximation. \label{perturbed_price_step}
\item Minimize $||\perturbed{\delta}_{ik}+C||_2$, holding all constraints active.  Here $C > \max_{i,k}-\delta_{ik}$ is a constant added to ensure that negative $\delta_{ik}$ are properly ordered. \label{proc_line:minimize_perturbed_quadratic}
\item Solve a new program where the objective in step~\ref{proc_line:minimize_perturbed_quadratic} becomes a constraint and we maximize the sum of the prices. 
\end{enumerate}
\end{algoprocedure}

In Section~\ref{subsubsec:effect_of_price_generation}, we confirm the effectiveness of our price generation procedure by comparing it to a simpler benchmark inspired by \citet{kwasnica2005new} and \citet{Lubin.2008}. Our enhanced procedure significantly reduces bidder effort, as measured by the number of bundles considered during bounds refinements, while also improving allocative efficiency.

\subsection{Convergence}
\label{sec:ConvergencePhase}

When the second phase is complete, bidders will have been guided to report and refine bounds on bundles that are important for determining efficiency via \refMRPARActivityRule.  However, we may not have elicited enough information to determine the efficient allocation at reports: \refMRPARActivityRule's guarantee of convergence is approximate in the $\delta$ of the $\delta-$approximate linear clearing prices. We therefore continue into the third phase, where the \IntervalReduction activity rule ensures bidders further refine the bounds on the bundles that have already been reported in $R$ such that iMLCA can perfectly identify the bundles belonging to an efficient outcome at reports.  Our \IntervalReduction activity rule thus serves a similar purpose to the earlier DIAR rule in \citet{Lubin.2008}, but is much simpler to implement and obey.  Moreover, as we shall see in Section~\ref{subsubsec:effect_of_convergence_phase}, it requires less bidder revelation.  The rule is as follows:

\begin{activityrule}
\textsc{\IntervalReduction Activity Rule\label{act:convergence}}\ \\
Given $\varepsilon$ and bundle set $Q_i$, bidder $i$ must submit reports such that $\frac{\overline{v}_i(x)-\underline{v}_i(x)}{\overline{v}_i(x)} \leq \varepsilon \;\forall\; x \in Q_i$.
\end{activityrule}
The rule forces revelation according to a relative measure (i.e. $\frac{\overline{v}(x)-\underline{v}(x)}{\overline{v}(x)}$, the ratio of interval size and the upper bound value), and is parameterized by a set of bundles over which it operates, $Q_i$, and a constant $\varepsilon$, which specifies how much tightening is required.  In each round, $\varepsilon$ is reduced by $50\%$ in Algorithm~\ref{alg:imlca}. Therefore, $\varepsilon \rightarrow 0$ in the number of rounds, at which point full revelation over the bundles in $Q_i$ would occur.
However, Algorithm~\ref{alg:imlca} does not proceed to force revelation all the way to this limit.  Instead it may stop much sooner, and with significant slack in the bounds in $Q_i$. The stopping condition we employ for the \ConvergencePhase is as follows:

\begin{algoprocedure}
\begin{samepage}
\textsc{Convergence Stopping Rule}\label{proc:convergence_achieved}\\[1ex]
$\omega=\frac{\underline{v}(\underline{a})}{ \perturbed{v}(\perturbed{a})} = 1$ 
and
$\frac{\overline{v}(\underline{a})-\underline{v}(\underline{a})}{\overline{v}(\underline{a})} \leq \varepsilon\textsuperscript{stop}$     
\end{samepage}
\end{algoprocedure}

This stopping rule has two requirements.  First, it ensures that the bounds have been narrowed enough to calculate the efficient allocation at reports.  Specifically, after each round, it computes the convergence bound from \citet[Equation 26]{Lubin.2008}: $\omega = \underline{v}(\underline{a})/ \perturbed{v}(\perturbed{a}) \leq 1$.  Recall that the perturbed valuation~$\perturbed{v}$ can be used to measure the maximal surplus of choosing any bundle $x_i$ over $a_i$.  Accordingly, the allocation with maximal such surplus is $\perturbed{a} \in \argmax_{a\in \mathcal{F}_R}{\perturbed{v}(a|\underline{a})}$.  Thus, when $\omega=1$, convergence is achieved and the efficient allocation at reports can be guaranteed. 
Second, as we will discuss in Section~\ref{subsec:incentives}, the size of the reported intervals for the final allocation pertains to the mechanism incentives.  It is thus desirable for the auctioneer to have control over the permitted gap size, $\overline{v}(\underline{a}) - \underline{v}(\underline{a})$, at the end of elicitation.  To provide this control, the \ConvergencePhase stops only when the relative reported interval $\frac{\overline{v}(\underline{a}) - \underline{v}(\underline{a})}{\overline{v}(\underline{a})}$ is no larger than a chosen parameter $\varepsilon\textsuperscript{stop}$.

\begin{algorithm}[tb]
\caption{Identify Relevant Bundles for Convergence for Bidder $i$}
\DontPrintSemicolon
\label{alg:convergence_bundles}
\SetAlgoLined
\Parameter{ 
bidder $i$;
interval size limit $\varepsilon$; Queries per round $Q\textsuperscript{round}$
}
$Q_i := \{\}$ \Comment*[r]{The set of bundles to query} 
$E_i := \{\}$ \Comment*[r]{The set of excluded bundles} 
\lIf*{$\frac{\overline{v}_i(\underline{a}_i)-\underline{v}_i(\underline{a}_i)}{\overline{v}(\underline{a}_i)} > \varepsilon$}{
    $Q_i := Q_i \cup \{\underline{a}_i\}$ \label{alg_line:convergenceTightCheck1}
    \Comment*[r]{Query $\underline{a}_i$ if interval is not yet tight enough}
    } 
    \lElse* {
    $E_i := E_i \cup \{\underline{a}_i\}$ \Comment*[r]{Otherwise exclude from being queried}
}
\While{$|Q_i| < Q\textsuperscript{round}$\label{alg_line:convergenceLoopStart}}{
    $\mathcal{F}' := \{a\in \mathcal{F}_R : a_i \not\in\ Q_i \cup E_i\}$; \label{alg_line:convergenceFprime} \\
    $\perturbed{a}' \in \argmax_{a\in \mathcal{F}'}\sum_i{\perturbed{v}_i(a_i)}$;
    \label{alg_line:convergenceFprimeAlloc}\\
\lIf*{$\frac{\overline{v}_i(\perturbed{a}_i')-\underline{v}_i(\perturbed{a}_i')}{\overline{v}(\perturbed{a}_i')} > \varepsilon$}{
    $Q_i := Q_i \cup \{\perturbed{a}_i'\}$ 
    \label{alg_line:convergenceTightCheck2}
    \Comment*[r]{Query $\perturbed{a}_i'$ if interval is not yet tight enough}
    } 
    \lElse* {
    $E_i := E_i \cup \{\perturbed{a}_i'\}$ \Comment*[r]{Otherwise exclude from being queried}
}
}\label{alg_line:convergenceLoopEnd}
\KwRet{$Q_i$}
\end{algorithm}

It remains to describe the set of bundles $Q_i$ over which the \IntervalReduction activity rule operates, which we identify via Algorithm~\ref{alg:convergence_bundles}.  The algorithm is parameterized to ensure bidders are provided with $Q^\textsuperscript{round}=|Q_i|$ bundles.
Given the structure of $\omega$, tighter bounds on either $\underline{a}$ or $\perturbed{a}$ will achieve progress towards convergence. Accordingly, the algorithm considers including these allocations in Lines~\ref{alg_line:convergenceTightCheck1} and~\ref{alg_line:convergenceTightCheck2}, if their reported interval is not yet tight enough.  The loop in Lines~\ref{alg_line:convergenceLoopStart}-\ref{alg_line:convergenceLoopEnd} selects not only the most preferred bundle $\perturbed{a}_i$ at $\perturbed{v}_i$, but also the next-best and next-next-best, etc. (Lines~\ref{alg_line:convergenceFprime} and~\ref{alg_line:convergenceFprimeAlloc}), until $|Q_i|=Q^\textsuperscript{round}$. 

\subsection{Outcome Determination}
\label{subsec:outcomeDetermination}

iMLCA determines the final allocation based on the set of reported bundles only; thus, the set of feasible allocations is $\mathcal{F}_R = \{a\in\mathcal{F} : a_i \in R_i \; \forall i\}$. The final allocation $\underline{a}$ is determined by  $\argmax_{a\in\mathcal{F}_R} \sum_{i\in N}\underline{v}_i(a_i)$.  When determining the allocation, we take a conservative approach by using the lower bound valuation $\underline{v}$ (Line~\ref{alg_line:out_start} of Algorithm~\ref{alg:imlca}). This guarantees that we never implement an outcome that is higher than a bidder's \textit{true} value (under truthful bidding), which ensures individual rationality. Finally, we also use $\underline{v}$ to compute VCG-style payments (Line~\ref{alg_line:out_end} of Algorithm~\ref{alg:imlca}). We discuss the incentive implications of this design in Section~\ref{subsec:incentives}. 

\subsection{Additional Design Features of iMLCA}
\label{sec:AdditionalDesignFeatures}
We equip iMLCA with several additional design features that may be important in some domains. First, similarly to \citet{brero2019machine}, we allow bidders to provide additional information to the mechanism by reporting unsolicited bundle-value pairs in the initialization phase; we refer to these as ``push bids.''  Second, we allow the auctioneer to select alternative payment rules to VCG, e.g., the \textit{VCG-nearest} rule \citep{day2012quadratic}, which we also test in our experiments.

\section{Theoretical Analysis}
\label{sec:theoretical analysis}

In this section, we present a theoretical analysis of iMLCA. 
We begin by analyzing standard mechanism design properties, specifically individual rationality (Subsection~\ref{subsec:IR}), no-deficit (Subsection~\ref{subsec:no_deficit}), and bidders' incentives (Subsection~\ref{subsec:incentives}). Next we analyze the implication of our new price-based activity rule on bidder effort (Subsection~\ref{sec:BidderEffortReduction}). Finally, we discuss bounding the efficiency loss in the performance of the ML algorithm (Section~\ref{sec:effLoss}).

\subsection{Individual Rationality} \label{subsec:IR}

A mechanism satisfies individual rationality if no bidder pays more than their reported value.

\begin{proposition}[Individual Rationality]
iMLCA satisfies individual rationality.
\end{proposition}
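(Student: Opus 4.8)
The plan is to run the standard individual-rationality argument for VCG-style payments, adapted to the fact that in iMLCA both the final allocation (Line~\ref{alg_line:out_start}) and the payments (Line~\ref{alg_line:out_end}) are computed with respect to the \emph{lower} reported value functions $\underline{v}$. I would fix a bidder $i$ and write out her realized utility using her true value $v_i$ together with the payment rule:
\[
u_i = v_i(a_i^{\text{iMLCA}}) - p_i^{\text{iMLCA}} = v_i(a_i^{\text{iMLCA}}) - \sum_{j\neq i}\underline{v}_j(\underline{a}_j^{(-i)}) + \sum_{j\neq i}\underline{v}_j(a_j^{\text{iMLCA}}).
\]

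The argument rests on truthful interval bidding, under which a bidder's reported interval always contains her true value; in particular $\underline{v}_i(a_i^{\text{iMLCA}}) \le v_i(a_i^{\text{iMLCA}})$. Substituting this lower bound for the true value and regrouping gives
\[
u_i \ge \sum_{j\in N}\underline{v}_j(a_j^{\text{iMLCA}}) - \sum_{j\neq i}\underline{v}_j(\underline{a}_j^{(-i)}),
\]
so it suffices to show the right-hand side is nonnegative, i.e.\ that the lower-bound welfare of the chosen allocation is at least that of the marginal-economy allocation $\underline{a}^{(-i)}$.

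To close this gap I would exhibit an explicit feasible witness in $\mathcal{F}_R$: assign bidder $i$ the empty bundle and assign every other bidder $j$ her marginal-economy bundle $\underline{a}_j^{(-i)}$. Removing bidder $i$ only frees items, so this profile is feasible, and it lies in $\mathcal{F}_R$ provided the empty bundle is an available report for $i$ (which holds by the normalization $\underline{v}_i(\emptyset)=0$). Since $a^{\text{iMLCA}}$ is by definition the maximizer of lower-bound welfare over all of $\mathcal{F}_R$, its welfare is at least that of the witness, namely $\sum_{j\neq i}\underline{v}_j(\underline{a}_j^{(-i)})$; this is exactly the inequality needed, giving $u_i \ge 0$.

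I expect the main obstacle to be feasibility bookkeeping rather than any hard inequality: I must make sure the empty bundle is always among bidder $i$'s reports (so the witness lives in $\mathcal{F}_R$) and that $\underline{a}^{(-i)}$ is optimized over the same report-restricted marginal allocations. The only genuine assumption is truthful bidding in the sense that reported lower bounds never exceed true values; without it IR can fail, so this should be stated explicitly. Note that the same optimality argument also shows $p_i^{\text{iMLCA}}\ge 0$, since $a^{\text{iMLCA}}$ restricted to $j\neq i$ is itself a feasible marginal allocation.
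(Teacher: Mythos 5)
Your proposal is correct and takes essentially the same route as the paper: both arguments reduce individual rationality to the optimality of $a^{\text{iMLCA}}$ for the lower-bound welfare objective over $\mathcal{F}_R$, compared against the marginal-economy allocation $\underline{a}^{(-i)}$ extended with the empty bundle for bidder $i$. The only difference is that you make explicit two steps the paper leaves implicit, namely the truthful-bidding inequality $\underline{v}_i(a_i^{\text{iMLCA}}) \le v_i(a_i^{\text{iMLCA}})$ and the feasibility bookkeeping placing the witness allocation in $\mathcal{F}_R$.
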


\begin{proof}
The reported lower bound utility of each bidder $i$ is non-negative as $$\underline{v}_i\left(\underline{a}\right) - p_i = \underline{v}\left(\underline{a}\right) - \sum_{j\neq i}\underline{v}_j\left(a_j^{-i}\right) \geq 0.$$
Indeed, the marginal allocation $a_j^{-i}$ is also a feasible allocation for the main economy (i.e $a_j^{-i} \in \mathcal{F}_R$) but the allocation with the highest lower bound value in the main economy is $\underline{a} \in \argmax_{a\in \mathcal{F}_R}\underline{v}\left(a\right)$. Therefore we have 
$\underline{v}\left(\underline{a}\right) \geq \sum_{j\neq i}\underline{v}_j\left(a_j^{-i}\right).$
\end{proof}

\subsection{No-Deficit}
\label{subsec:no_deficit}

A mechanism satisfies no-deficit if the sum of all payments is weakly positive. We have:

\begin{proposition}[No-Deficit]
iMLCA satisfies no deficit.
\end{proposition}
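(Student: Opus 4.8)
The plan is to show directly that each bidder's payment is nonnegative, which immediately implies that the sum of all payments is nonnegative. Recall from Line~\ref{alg_line:out_end} of Algorithm~\ref{alg:mlcab} that the payment of bidder $i$ is
$$
p_i\textsuperscript{iMLCA} = \sum_{j\neq i}\underline{v}_j\left(\underline{a}_j^{(-i)}\right) - \sum_{j\neq i}\underline{v}_j\left(a_j\textsuperscript{iMLCA}\right).
$$
So it suffices to argue that the first sum is at least the second sum for every bidder $i$.

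The key observation is that $\underline{a}^{(-i)}$ is, by definition, the allocation in the marginal economy (with bidder $i$ removed) that maximizes the total reported lower-bound value $\sum_{j\neq i}\underline{v}_j(a_j)$ over all feasible allocations of that marginal economy. First I would note that the final allocation $a\textsuperscript{iMLCA}$, when restricted to the bidders $j\neq i$, yields a feasible allocation in the marginal economy (dropping bidder $i$'s bundle only frees up items, so feasibility is preserved, and each $a_j\textsuperscript{iMLCA}\in R_j$). Hence it is a candidate over which $\underline{a}^{(-i)}$ optimizes.

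By optimality of $\underline{a}^{(-i)}$ in the marginal economy, we therefore have
$$
\sum_{j\neq i}\underline{v}_j\left(\underline{a}_j^{(-i)}\right) \geq \sum_{j\neq i}\underline{v}_j\left(a_j\textsuperscript{iMLCA}\right),
$$
which gives $p_i\textsuperscript{iMLCA}\geq 0$ for every $i$, and summing over $i$ yields $\sum_i p_i\textsuperscript{iMLCA}\geq 0$. This is the standard argument establishing nonnegativity of VCG-style payments, reused here with the lower-bound reported valuation $\underline{v}$ in place of true values.

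I do not anticipate a genuine obstacle, as the authors themselves remark that the result is straightforward. The only point requiring a moment of care is verifying that the restriction of $a\textsuperscript{iMLCA}$ to bidders $j\neq i$ is in fact feasible for the marginal economy and lies in $\mathcal{F}_R$ restricted to those bidders, so that it is a legitimate competitor in the maximization defining $\underline{a}^{(-i)}$; this is immediate from the definition of $\mathcal{F}_R$ and the fact that removing a bidder only relaxes the supply constraints.
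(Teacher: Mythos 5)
Your proof is correct and follows essentially the same route as the paper: both argue that each individual payment $p_i\textsuperscript{iMLCA}$ is nonnegative because $\underline{a}^{(-i)}$ maximizes $\sum_{j\neq i}\underline{v}_j(\cdot)$ over the reported feasible allocations, among which (the restriction of) $a\textsuperscript{iMLCA}$ is a valid candidate. Your added remark on why the restriction of $a\textsuperscript{iMLCA}$ remains feasible in the marginal economy is a small point of extra care that the paper leaves implicit, but it is not a different argument.
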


\begin{proof}
The payment of each bidder $i$ is non-negative, i.e., 
$p_i =  \sum_{j\neq i} \left( \underline{v}_j\left(a_j^{-i}\right) - \underline{v}_j(\underline{a}_j)\right) \geq 0.$
This is the case because $a_j^{-i} \in \mathcal{F}_R$ and $a_j^{-i} \in \mathcal{F}_R$ are both feasible allocations with respect to all reports $R$, but $a_j^{-i} \in \argmax_{a\in \mathcal{F}_R}\sum_{j\neq i}\underline{v}_i(a)$ is the allocation that maximizes the lowerbound value for the marginal economy. Thus, we have 
$ \sum_{j\neq i}\underline{v}_j\left(a_j^{-i}\right) \geq \sum_{j\neq i}\underline{v}_j(\underline{a}_j).$
\end{proof}

\subsection{Incentives}
\label{subsec:incentives}
In this section, we discuss the incentives of iMLCA. Like all spectrum auction designs used in practice \citep[e.g., CCA;][]{ausubel2017practical}, iMLCA is not strategyproof. This is not surprising, given that MLCA (using exact value reports) does not provide such a guarantee \citep{brero2019machine}. However, we can offer a parallel argument to that of \citet{brero2019machine} to argue that, in practice, iMLCA provides bidders with good incentives to report truthfully by aligning their utilities with the overall economic welfare. We start by defining a \textit{truthful strategy} for bidders reporting upper and lower bounds.
\begin{defn}[Truthful Strategy in iMLCA]
    In iMLCA, a bidder’s strategy is called \textit{truthful} if the bidder only reports upper and lower bounds that are consistent with their true values.\footnote{Note this includes both bounds for bundles queried by the auction or ``pushed'' by the bidder (see Section~\ref{sec:AdditionalDesignFeatures}).} 
\end{defn}

 Because iMLCA charges VCG payments, the utility of each bidder $i$ can be calculated as follows:
\begin{equation}\label{eq:utility_imlca}
u_i = \underbrace{v(\underline{a}_i) + \sum_{j\neq i} \underline{v}_j\left(\underline{a}_j\right)}_{\begin{subarray}{c}
\text{(I) welfare main economy}
\end{subarray}
} - \underbrace{\sum_{j\neq i} \underline{v}_j\left(a_j^{-i}\right)}_{\begin{subarray}{c}\text{(II) welfare marginal economy}\end{subarray}}
\end{equation}
Term (I) is the sum of $i$'s true value and the reported lower bound of the other bidders for $\underline{a}$, and term (II) is the reported lower bound of the other bidders for $a^{-i}$. If bidder $i$ wants to increase their utility, they must increase the difference (I) - (II). 

Next, we distinguish two pathways for manipulations in iMLCA:
\begin{enumerate}
    \item Bidder $i$ tries to change the allocation in their main or marginal economy.
    \item Bidder $i$ tries to affect the tightness of other  bidders' bounds in their respective main or marginal economies.
\end{enumerate}
The first pathway already existed in MLCA. The second pathway is new in iMLCA. We now explain why iMLCA is robust against both manipulation pathways in practice.

\noindent\textbf{Changing the allocation in the main or marginal economy.} A bidder may try to decrease term (II) by affecting which allocation the mechanism eventually uses for their marginal economy. However, following the same argument as for MLCA, we note that (II) is practically independent of bidder $i$'s report because iMLCA explicitly generates queries for each bidder's marginal economy (i.e., excluding all previous reports by bidder $i$). Given this, the only remaining way for bidder $i$ to improve their utility via an allocation change is to bias the final allocation $\underline{a}$ towards an allocation that has higher social welfare. But as in MLCA, iMLCA allows bidders to submit ``push bids'' in the initialization phase, enabling them to push information to the auction that they deem useful. Thus, if they have a bundle in mind which they believe will increase social welfare, they can simply submit a corresponding push bid. For more details, please see \citet{brero2019machine}.

\noindent\textbf{Affecting the tightness of other bidders' bounds.} In iMLCA, even without affecting an allocation change, a bidder can try to increase their utility by (a) inducing other bidders to tighten their bounds as much as possible on the final allocation $\underline{a}$ (i.e., increasing term (I)), or by (b) inducing other bidders to keep their bounds in their marginal economy as loose as possible (i.e., reducing term (II)). iMLCA explicitly addresses (a) as follows: In the \ConvergencePhase, the auctioneer can control the size of the reported intervals for the final allocation directly by choosing the parameter $\varepsilon\textsuperscript{stop}$. If $\varepsilon\textsuperscript{stop}$ is small enough, a bidder cannot force tighter bounds and thus iMLCA prevents such manipulations by design.%
\footnote{An auctioneer can balance information revelation (i.e., tighter intervals) and incentive properties by choosing $\varepsilon\textsuperscript{stop}$. Moreover, $\varepsilon\textsuperscript{stop}$ only affects a small subset of reports and the intervals of all other reports can remain loose.} 
Regarding (b), we note that iMLCA only requests bound refinements for bundles relevant for the final allocation $\underline{a}$. Thus, as there is no explicit bound refinement process in the marginals, any remaining manipulation strategies seem implausible to execute in practice.  We offer experimental evidence to this effect in Section~\ref{subsec:exp_bidder_incentives}.

By making the following assumptions, we can derive more formal incentive guarantees:
\begin{assumption}\label{assumption:independent_marginal}
	For every bidder $i$, if all other bidders report truthfully, then the reported welfare of bidder $i$’s marginal economy is independent of bidder $i$’s value reports.
\end{assumption}
As we have discussed above, explicitly querying each bidder $i$'s marginal economy makes $a^{-i}$ practically independent of their reports, supporting (but not guaranteeing) Assumption~\ref{assumption:independent_marginal}.
\begin{assumption}\label{assumption:lower_bounds}
	When all bidders are truthful, all bidders' lower bounds for the final allocation reported at the end of the auction, are equal to the true values, i.e., for each bidder $i$: $\underline{v}_i(a_i) = v_i(a_i)$.
\end{assumption}
By setting a sufficiently low $\varepsilon\textsuperscript{stop}$, the auctioneer can guarantee that, for each bidder $i$, the equation $\underline{v}_i(a_i) = v_i(a_i)$ gets arbitrarily close to holding with equality, supporting Assumption~\ref{assumption:lower_bounds}.
Assumptions~\ref{assumption:independent_marginal} and \ref{assumption:lower_bounds} allow us to extend the social welfare alignment property of MLCA to iMLCA:
\begin{proposition}[Social Welfare Alignment]
\label{Prop:SocialWelfareAlignment}
If Assumptions~\ref{assumption:independent_marginal} and \ref{assumption:lower_bounds} hold and  all other bidders are truthful, then a bidder can only increase their utility by increasing the reported social welfare of $\underline a$.
\end{proposition}
\begin{proof}
    The proof follows in a straightforward way by considering Equation~\eqref{eq:utility_imlca}. Because of Assumption~\ref{assumption:independent_marginal}, we know that each bidder cannot affect the reported welfare in their marginal economy. From Assumption~\ref{assumption:lower_bounds}, and because bidders are truthful, we have that the term (I) in Equation~\eqref{eq:utility_imlca} corresponds to the social welfare in the allocation $\underline{a}$. Thus, the only way for a bidder to increase their utility is by 
    increasing the reported social welfare for the final allocation $\underline a$.
\end{proof}
When combined with the opportunity to submit push bids, social welfare alignment already provides strong incentives for truthful reporting.
Concretely, instead of trying to game the mechanism to lead it to a more efficient allocation, a bidder can push truthful information they has about promising bundles to the auctioneer.
%
%
That said, social welfare alignment is not sufficient to guarantee that a bidder cannot \textit{accidentally} drive the mechanism to a better allocation via manipulations, without having this allocation in mind. To address this last scenario and obtain a formal equilibrium property, we need one more assumption:
\begin{assumption}\label{assumption:efficient_main}
    When all bidders are truthful then the final allocation of iMLCA is efficient.
\end{assumption}
As efficiency can be bounded in the quality of the ML algorithm (see Section~\ref{sec:effLoss} below), the assumption relies on the quality of the ML-algorithm.  Using SVRs, Assumption~\ref{assumption:efficient_main} holds in the majority of auction instances for two of the domains we study (see Section~\ref{sec:experimental_results}), but not for the third one. If Assumption~\ref{assumption:efficient_main} indeed holds, then we obtain the following result.
\begin{proposition}[Truthful Reporting is an Ex-post Nash Equilibrium]
    If Assumptions~\ref{assumption:independent_marginal}, \ref{assumption:lower_bounds}, and \ref{assumption:efficient_main} hold, truthful reporting is an ex post Nash equilibrium of the game induced by iMLCA.
\end{proposition}
\begin{proof}
    The proof follows by considering that, under Assumptions~\ref{assumption:independent_marginal} and \ref{assumption:lower_bounds}, iMLCA is social welfare aligned. 
    Using Assumption~\ref{assumption:efficient_main}, we have that this social welfare (and thus bidder $i$'s utility) is maximized when bidder $i$ is truthful. This concludes the proof.
\end{proof}

\subsection{Bidder Effort Reduction}
\label{sec:BidderEffortReduction}

\begin{figure}
\centering
\includegraphics[width=0.33\columnwidth]{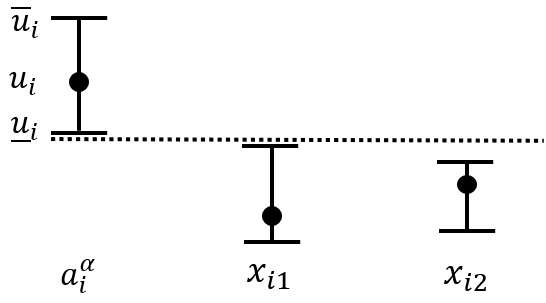}
\caption{%
\label{fig:mrpar_effort_reduction}
\small
Illustration of bundles that need not be considered by a bidder $i$ when responding to \refMRPARActivityRule. Shown is the scenario prior to refinement for given prices $\pi$. The bidder need not consider $x_{i1}$ and $x_{i2}$ as  it is immediate from the location of the bounds that neither can be their preferred bundle.}
\end{figure}
When a bidder is asked to refine their reports according to \refMRPARActivityRule, there may be many bundles that need not be considered at all. Specifically, bidder $i$ need not consider any bundle $x_i$ for which the utility interval induced by the given prices $\pi$ does not overlap with the utility interval of the provisionally allocated bundle. This effect is illustrated in Figure~\ref{fig:mrpar_effort_reduction}. The following proposition shows that all bundles for which $\perturbed{\delta}_{ik} \leq 0$ need not be considered according to \refMRPARActivityRule. 

\begin{proposition}[Bidder Effort] Bidder $i$ can ignore all bundles $x_{ik} \in R_i~\setminus~\{a_i^\alpha\}$ with $\perturbed{\delta}_{ik} \leq 0$ when responding to \refMRPARActivityRule.
\label{lemma:bidder_effort}
\end{proposition}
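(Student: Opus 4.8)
The plan is to unpack the definition of $\perturbed{\delta}_{ik}$ and show that a nonpositive value is a certificate that the bundle $x_{ik}$ is already dominated in exactly the sense MRPAR requires, so that no valid choice of the revealed-preference bundle $\check{x}_i$ can ever force a refinement of $x_{ik}$.

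First I would write out the clearing constraint \eqref{eq:delta_constraint} that defines $\perturbed{\delta}_{ik}$ in step~(ii) of Procedure~\ref{proc:price_new}, where Procedure~\ref{proc:prices} is invoked with $\hat{v} = \perturbed{v}$ and $\perturbed{v}$ is the perturbed valuation taken with respect to the provisional allocation $a^\alpha$. By the definition of the perturbed valuation, $\perturbed{v}_i(a_i^\alpha) = \underline{v}_i(a_i^\alpha)$, while $\perturbed{v}_i(x_{ik}) = \overline{v}_i(x_{ik})$ for every $x_{ik} \neq a_i^\alpha$. Substituting these into the (always feasible) constraint \eqref{eq:delta_constraint} and using $\perturbed{\delta}_{ik} \leq 0$ yields the key inequality
$$\underline{v}_i(a_i^\alpha) - \pi(a_i^\alpha) \geq \overline{v}_i(x_{ik}) - \pi(x_{ik}),$$
i.e., the provisional bundle $a_i^\alpha$ evaluated at its lower bound already weakly dominates $x_{ik}$ evaluated at its upper bound at the current prices.

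Next I would run through the two ways MRPAR can be satisfied. If the bidder reports $\check{x}_i = a_i^\alpha$, the MRPAR requirement against $x_{ik}$ is precisely the key inequality above, which holds already at the current bounds. If instead $\check{x}_i \neq a_i^\alpha$, MRPAR forces the strict relation $\underline{v}_i(\check{x}_i) - \pi(\check{x}_i) > \overline{v}_i(a_i^\alpha) - \pi(a_i^\alpha)$; chaining this with $\overline{v}_i(a_i^\alpha) \geq \underline{v}_i(a_i^\alpha)$ and the key inequality gives $\underline{v}_i(\check{x}_i) - \pi(\check{x}_i) > \overline{v}_i(x_{ik}) - \pi(x_{ik})$, so $\check{x}_i$ dominates $x_{ik}$ automatically. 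In either case the MRPAR constraint relating $\check{x}_i$ to $x_{ik}$ holds without any tightening of the bounds on $x_{ik}$, so the bidder can leave those bounds untouched.

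Finally I would dispose of the one edge case: $x_{ik}$ itself can never be a legitimate choice of $\check{x}_i$ in the strict branch, since the same key inequality together with $\underline{v}_i \leq \overline{v}_i$ makes it impossible for $x_{ik}$ to strictly beat $a_i^\alpha$ as MRPAR would demand. Hence such bundles are never the preferred bundle and never require refinement, which is exactly what it means to ``ignore'' them. The main obstacle I anticipate is bookkeeping rather than depth: one must be careful that the perturbation is taken with respect to $a^\alpha$ (so that the \emph{lower} bound appears on the allocated side and the \emph{upper} bound on every competing side), and then correctly exploit the gap $\overline{v}_i \geq \underline{v}_i$ to upgrade strict dominance over $a_i^\alpha$ into strict dominance over $x_{ik}$; the remainder is a short chain of inequalities.
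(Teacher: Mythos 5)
Your proposal is correct and follows essentially the same route as the paper: you derive the identical key inequality $\underline{v}_i(a_i^\alpha) - \pi(a_i^\alpha) \geq \overline{v}_i(x_{ik}) - \pi(x_{ik})$ from $\perturbed{\delta}_{ik} \leq 0$ via constraint \eqref{eq:delta_constraint} with $\hat{v} = \perturbed{v}$, and then chain it through the two branches of MRPAR exactly as the paper's proof does. Your treatment is in fact somewhat more careful than the paper's, since you explicitly rule out the edge case $\check{x}_i = x_{ik}$, which the paper handles only implicitly by restricting attention to $\check{x}_i \neq x_{ik}$.
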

\begin{proof}
Before refinement, it holds that 
$
\underline{v}_i(a_i^\alpha) - \pi(a_i^\alpha) \geq \overline{v}_i(x_{ik}) - \pi(x_{ik})
$
for all bundles $x_{ik} \neq a_i^\alpha$ with $\perturbed{\delta}_{ik} \leq 0$. 
But according to \refMRPARActivityRule bidder $i$ must report some $x^{\ast}_i \neq x_{ik}$ such that   
$
\underline{v}_i(x^{\ast}_i) - \pi(x^{\ast}_i) \geq \underline{v}(a_i^\alpha) - \pi(a_i^\alpha) \geq \overline{v}_i(x_{ik}) - \pi(x_{ik}).
$
\end{proof}

As we maximize the number of such $\perturbed{\delta}_{ik}$ in Step~\ref{perturbed_price_step} of Procedure~\ref{proc:price_new}, we are guaranteed to reduce the bidder effort in terms of bundles to consider during an \refMRPARActivityRule refinement, when compared to alternative price generation procedures that omit this step \citep[e.g., that of][]{kwasnica2005new}.  Our new linear price selection algorithm is helping us achieve our goal of reducing bidder elicitation effort; it may also be of independent use in other revealed preference-based mechanisms.  

\subsection{Bounding the Efficiency Loss}
\label{sec:effLoss}

We can bound the efficiency loss of the mechanism in the prediction performance of the ML-algorithm, following a similar result for MLCA \citep{brero2018combinatorial}.  Surprisingly, even though iMLCA only has interval information to work with, it can reason about those intervals sufficiently well that the  bound on efficiency is no weaker than that for MLCA, as provided in the following proposition:

\begin{proposition}[Efficiency Bound]\label{prop:learning_error_learned}
Assume that bidders' interval reports are consistent with their true values.  Let $\tilde{v}$ be some valuation profile learned by the ML algorithm, let $\tilde a$ be an efficient allocation w.r.t. to $\tilde{v}$, and let $a^*$ be an efficient allocation w.r.t. the true valuation profile.  Assume that the learning errors in $\tilde{a}$ and $a^*$ are bounded as follows: for each bidder $i$, $|\tilde{v}_i(\tilde{a}) - v_i(\tilde{a})|\le\delta_1$ and $|\tilde{v}_i(a^*) - v_i(a^*)|\le\delta_2$, for $\delta_1,\delta_2\in\mathbb R$. Then the following bound on the efficiency loss in the final allocation $\underline{a}$ holds for all $\tilde{v}$ learned for any economy, in any iteration of iMLCA:

\begin{equation}\label{eq:learning_error}
1 - \frac{v(\underline{a})}{v(a^*)} \le  \frac{n(\delta_1+\delta_2)}{v(a^*)}
\end{equation}
\end{proposition}
\begin{proof}
Let $a^+ \in \max_{a\in F_R} v(a)$ be the allocation that maximized the welfare with respect to reports and assuming truthful reports we have according to \citet{brero2019machine} 
\begin{equation}
\frac{v(a^+)}{v(a^*)} \geq 1-\frac{n(\delta_1+\delta_2)}{v(a^*)}
\end{equation}
and according to \citet{Lubin.2008} 
\begin{equation}
\frac{v(\underline{a})}{v(a^+)} \geq \frac{\underline{v}(\underline{a})}{\perturbed{v}(\perturbed{a})}.
\end{equation} Multiplying these inequalities and from the fact that $\frac{\underline{v}(\underline{a})}{\perturbed{v}(\perturbed{a})} = 1$ at the end of iMLCA due the stopping condition of the \ConvergencePhase (Procedure \ref{proc:convergence_achieved}) the proposition follows.
\end{proof}

\section{Experimental Evaluation}
\label{sec:experiments}

In this section, we evaluate the performance of iMLCA by comparing it against the original MLCA and the CCA.  We consider MLCA to be an upper bound benchmark for iMLCA, given that MLCA is similar in design but works with more precise information. We also conduct ablation studies to assess the impact of iMLCA's design features, including its price generation procedures and \ConvergencePhase.

\subsection{Experiment Set-Up}

\subsubsection{Domains.} For our experiments, we use the spectrum auction test suite (SATS) version 0.8.0.
\citep{weiss2017sats}. SATS enables us to generate thousands of synthetic auction instances on demand, providing access to bidders' values, and allowing us to quickly compute the true efficient allocation. We use the following three domains:

\begin{description}
\item[Global Synergy Value Model (GSVM)]\citep{goeree2010hierarchical}: A relatively simple medium-sized domain with 18 items and 7 bidders of different types. 

\item[Local Synergy Value Model (LSVM)]\citep{scheffel2012impact}: A medium-sized domain with 18 items and 6 bidders of different types. Synergies arise from the spatial proximity of items, which makes LSVM more complex.

\item[Multi-Region Value Model (MRVM)]\citep{weiss2017sats}:  A large-sized domain with 98 items and 10 bidders of different types. It models the complex 700Mhz Canadian spectrum auction.
\end{description}

\subsubsection{Modeling Bidder Behavior.}
When simulating bidders' responses to queries in our experiments, we model bidders as behaving consistently with their true preferences. Thus, for every bidder $i$ and every bundle $x \in R_i$, it holds that: $\underline{v}_i(x) \leq v_i(x) \leq \overline{v}_i(x)$. As we do not want the reported intervals $[\underline{v}_i(x) , \overline{v}_i(x)]$ to always be centered around the true value $v_i(x)$, we simulate the interval reports following the same approach as \citet{brero2018combinatorial}: For any given bundle $x$ and bidder $i$, we independently draw two error measures $z_1$ and $z_2$ from a standard normal distribution with mean 0 and standard deviation $\mu$, where $\mu$ is a parameter capturing the bidder's reported interval size. The lower and upper bounds are then obtained as $\underline{v}_i(x) = \max(0,v_i(x)(1-z_1))$ and $\overline{v}_i(x) = v_i(x)(1+z_2)$, respectively. In our experiments, we use $\mu = 0.5$. To provide some intuition: with $\mu = 0.5$, a typical interval report is $[100, 200]$. To simulate the process by which bidders tighten their bounds, we have implemented two straightforward heuristics that adjust bounds to meet \refMRPARActivityRule and \refIntervalReductionActivityRule respectively. For details, please see \appref{app:bidderRefinement}.

\subsubsection{Mechanism Set-Up.}
For iMLCA, we use the same parameters as were used by \citet{brero2019machine} for MLCA\footnote{The parameters adopted from \citet{brero2019machine} include the SVR parametrization, $Q\textsuperscript{init}=50$, $Q\textsuperscript{max}=500$ ($Q\textsuperscript{max}=100$ for GSVM) and the number of interval queries per round and bidder in the \emph{ML-based elicitation and bound refinement phase} $Q\textsuperscript{round}$ is set to 4. The same $Q\textsuperscript{round}$ is used as parameter for Algorithm \ref{alg:convergence_bundles}. }.
We set the final interval size limit $\varepsilon\textsuperscript{stop}=0.005$. For example, this corresponds to a maximal interval of the reported social welfare of $[99.5, 100]$. The refinement parameter $\alpha$ is obtained as described in \citet{Lubin.2008}. To solve all optimization problems, we use CPLEX (version 12.10) and adopt a time limit for some problems (including the new price generation problem) similar to \citet{brero2019machine}.\footnote{We used an Ubuntu 16.04 cluster with AMD EPYC 7702 2.0 GHz processors with 8 cores and 32 GB of RAM. We use the best sub-optimal solution found when the time limit is reached.} Our source code is publicly available under an open-source license at \url{https://github.com/marketdesignresearch/mechlib}.

\subsection{Results: Comparing iMLCA, MLCA and CCA}
\label{sec:experimental_results}

Table~\ref{tbl:iMLCA} provides a comparison of iMLCA, MLCA and CCA in all three domains.\footnote{To enable a head-to-head comparison, we use the same setting seeds (for the test set) as used by \citet{brero2019machine} which are seeds $101-200$ for GSVM and LSVM, and seeds $101-150$ for MRVM.} The efficiency of an allocation $a$ is measured as $v(a)/v(a^*)$ and revenue as the total payment relative to the maximal social welfare $\sum_i p_i / v(a^*)$. We further report alternative VCG-\emph{nearest} core selecting payments \citep{day2012quadratic} and the total number of rounds (including all phases of iMLCA).

\begin{table}[tb]
\centering
\begin{tabular}{ccrrrrr}
Domain & Mechanism & Efficiency & Revenue & Revenue (Core) & Rounds \\ 
\hline 
\hline 
\multirow{3}{*}{GSVM}
 & iMLCA & 100.0\% (0.01)
 & 58.5\% (1.32)	
 & 63.1\% (1.22)	
 & 16 (0.1)
 \\ 
\cline{2-6}
 & MLCA & 100.0\% (0.00) 
 & 68.4\% (1.10)	
 & 72.4\% (1.05)	
 & 14 (0.0) \\ 
\cline{2-6}
 & CCA & 100.0\% (0.00) 
 & 68.1\% (1.13)
 & 73.1\% (0.99)
 & 235 (3.0)
 \\ 
\hline 
\hline
\multirow{3}{*}{LSVM}
 & iMLCA & 99.6\% (0.10)
 & 74.4\% (1.22)
 & 78.7\% (1.12)
 & 122 (1.2)
 \\ 
\cline{2-6}
 & MLCA & 99.6\% (0.12) 
 & 80.9\% (0.91)	
 & 84.5\% (0.83)	
 & 114 (0.0) \\ 
\cline{2-6}
 & CCA & 99.9\% (0.03)
 & 82.3\% (0.91)
 & 86.4\% (0.73)
 & 125 (0.3) 
 \\ 
\hline 
\hline
\multirow{3}{*}{MRVM}
 & iMLCA & 96.2\% (0.15)
 & 23.8\% (0.66)	
 & 25.1\% (0.57)
 & 122 (0.4)
 \\ 
\cline{2-6}
 & MLCA & 96.4\% (0.13)
 & 40.5\% (0.33)	
 & 40.7\% (0.34)
 & 114 (0.0) \\ 
\cline{2-6}
 & CCA & 94.2\% (0.20)
 & 30.0\% (0.77)
 & 34.5\% (0.39)
 & 142 (0.9)
 \\ 
\end{tabular} 
\caption{Results for iMLCA, MLCA and CCA with a maximal number of reports of 500 (100 for GSVM). Standard errors in parentheses. Averages over 100 instances (50 for MRVM).}
\label{tbl:iMLCA}
\end{table}

We focus our analysis on efficiency, as spectrum auctions are generally government-run auctions that target efficiency and not revenue maximization \citep{cramton2013spectrum}.
We use a one-way ANOVA to test for statistically significant differences between all three mechanisms, and we use a post-hoc Tukey test for further pairwise comparisons.

In GSVM, all three mechanisms perform extremely well with 100.0\% efficiency (with no  statistically significant difference; $p>0.1$). Next, we consider LSVM, which is more complex and thus more difficult to learn with the kernel-based SVRs we employ \citep[see][]{brero2019machine}. Here, we find a statistically significant difference between the three mechanisms ($p<0.05$). Further analysis shows that the CCA performs better than MLCA ($p<0.05$). However, CCA and iMLCA do not perform statistically significantly different ($p=0.08$). Furthermore, MLCA and iMLCA do not perform statistically significantly different ($p=0.84$)  

To put the results into perspective, we note that GSVM and LSVM are very simple test domains. In fact, the regional bidders in GSVM have value for less than 64 bundles. Thus, the CCA can elicit those bidders' full value functions, because they can report their value for up to 100 bundles in the supplementary round. This gives the CCA a natural advantage in these simple domains compared to MLCA and iMLCA, which do not have information about the bidders' bundles of interest and must therefore contend with the full bundle space. In light of this, it is particularly impressive that iMLCA does not perform worse than the CCA in GSVM nor in LSVM.

Next, we consider MRVM, where we find a statistically significant difference between the three mechanisms ($p<0.001$). When we compare the mechanisms pairwise, we find no statistically significant difference between the efficiency of MLCA and iMLCA ($p=0.65$). By contrast, the efficiency of the CCA is lower than the efficiency of iMLCA ($p=0.001$) and of MLCA ($p=0.001$).

\begin{remark}
Our experimental results also show that iMLCA achieves slightly lower revenue than the other mechanisms.  This is due to our activity rules driving more elicitation in the main rather than the marginal economies. We discuss potential ways to increase iMLCA's revenue in Section~\ref{sec:discussion}.
\end{remark}

Summarizing our main experiments, we conclude that iMLCA achieves the same efficiency as MLCA, even though it operates with interval queries instead of exact queries. Furthermore, iMLCA achieves a 2\% point higher efficiency than the CCA in the realistically-sized MRVM domain.

\subsection{Reducing Bidder Effort}
\label{subsec:reducing_bidder_effort}

In this section, we show that iMLCA achieves a reduction in bidder effort over MLCA, which was our main design goal and which is iMLCA's main advantage.

\subsubsection{Evaluating Reported Interval Size} 
\label{subsubsec:interval_size}

\begin{table}[tb]
\begin{center}
\begin{tabular}[t]{cccc}
Mechanism & Domain &  \makecell{Initial\\Interval\\Size} & \makecell{Final\\Interval\\Size}\\ 
\hline
\hline 
\multirow{3}{*}{iMLCA}
& GSVM &  55\% & 46\% \\ 
\cline{2-4}
& LSVM & 55\% & 48\% \\ 
\cline{2-4}
& MRVM & 55\% & 41\% \\ 
\hline 
MLCA & all domains & 0\% & 0 \% \\ 
\end{tabular}
\caption{Comparison of average reported relative interval size between iMLCA and MLCA after the \emph{initialization phase} and at the end of the auction. Mean of 100 instances (50 for MRVM).}
\label{tbl:interval_size}
\end{center}
\end{table}

As our measure for bidder effort, we introduce the relative size of all reported intervals. Concretely, for a report $(x_{ik},\underline{v}_{ik},\overline{v}_{ik})$ we measure the \textit{relative interval size} as $(\overline{v}_{ik}-\underline{v}_{ik})/\overline{v}_{ik}$.
In Table~\ref{tbl:interval_size}, we compare MLCA and iMLCA in terms of the initial and final interval size (averaged over the reported bundles) for all bundles with a positive upper bound value (i.e., all bundles a bidder is interested in). For MLCA, the interval is always $0$, because bidders must report exact values. In contrast, for iMLCA, the final interval size is between 46\% and 41\%. Note that even 41\% corresponds to remarkably loose intervals. To see this, consider that with an average final interval size of $41\%$, a typical interval is $[59,100]$. In \appref{app:intervalSizeDistribution}, we also report the distribution over these interval sizes. These results highlight that iMLCA allows bidders to maintain large intervals on the majority of their reported values despite iteratively asking bidders to refine their bounds and forcing relatively tight intervals for the final allocation.
%

\subsubsection{Effect of New \ConvergencePhase}
\label{subsubsec:effect_of_convergence_phase}

Next, we evaluate the effectiveness of our novel \ConvergencePhase. To do this, we compare it against a version of iMLCA where the \ConvergencePhase is replaced by the DIAR activity rule as proposed by \citep{Lubin.2008}. We refer to this version as iMLCA-DIAR. iMLCA-DIAR stops the refinement when $\frac{\underline{v}(\underline{a})}{ \perturbed{v}(\perturbed{a})} = 1$, i.e. when the allocation that is guaranteed to maximize efficiency at reported values has been identified. The results are shown in Table \ref{tbl:iMLCA-diar}. We observe that iMLCA-DIAR needs more than a hundred additional rounds of elicitation in every domain (and several hundred in MRVM) to achieve convergence when compared to iMLCA with the new \ConvergencePhase. Furthermore, final interval sizes for iMLCA-DIAR are tighter than in iMLCA with the new \ConvergencePhase, i.e. bidders are required to reveal far more information. 

\begin{table}[tb]
\begin{center}
\begin{tabular}{ccrr}
\centering
Domain & Mechanism & Rounds & Final Interval Size\\ 
\hline 
\hline 
\multirow{2}{*}{GSVM}
 & iMLCA & 16 (0.1) & 46\% \\ 
\cline{2-4}
 & iMLCA-DIAR & 135 (20.4) & 34\% \\ 
\hline 
\hline
\multirow{2}{*}{LSVM}
 & iMLCA & 122 (1.2) & 48\% \\ 
\cline{2-4}
 & iMLCA-DIAR & 317 (29.9) & 39\% \\ 
\hline 
\hline
\multirow{2}{*}{MRVM}
 & iMLCA & 122 (0.4) & 41\% \\ 
\cline{2-4}
 & iMLCA-DIAR & 496 (59.5) & 33\% \\ 
\hline 
\end{tabular} 
\caption{Results for iMLCA and iMLCA-DIAR with a maximal number of reports of 500 (100 for GSVM). Standard errors in parentheses. Averages over 100 instances (50 for MRVM).}
\label{tbl:iMLCA-diar}
\end{center}
\end{table}

\subsubsection{Evaluation of Enhanced Price Generation Procedure}
\label{subsubsec:effect_of_price_generation}

A key difference between iMLCA and MLCA is the requirement for bidders to refine their bounds based on auction prices during each round of the machine learning-based elicitation phase. To take advantage of this structure, it is important for the auction to quote prices that reduce the number of bundles bidders need to consider in their refinements.

In this section we assess the impact of our new price generation procedure (Procedure~\ref{proc:price_new}) by comparing it to a simpler alternative inspired by \citet{kwasnica2005new}. This simpler procedure uses $\delta$-approximate clearing prices with respect to $v^\alpha$ only.\footnote{The price generation procedure proposed by \citet{Lubin.2008} is also based on \citet{kwasnica2005new}, while they apply different tie-breaking rules to approximate final payments.} 
We refer to the auction using this baseline procedure as iMLCA-K and evaluate iMLCA and iMLCA-K in terms of the average number of bundles bidders consider per \refMRPARActivityRule refinement and the final allocation efficiency, our primary objective.

Table~\ref{tbl:effort} summarizes our findings. Despite mixed trends in the smaller domains, our new price generation objective significantly reduces the number of bundles bidders need to consider during \refMRPARActivityRule refinements in the realistically sized MRVM domain. Additionally, in MRVM, iMLCA demonstrates significantly higher efficiency than iMLCA-K ($p<0.01$). This contrast is not observed in the smaller domains where differences in efficiency are not statistically different ($p=0.23$ in GSVM and $p=0.62$ in LSVM).

\begin{table}[tb]
\centering
\begin{tabular}{lccc}
Domain & Mechanism & Efficiency &
 {\centering \# Refinements} \\
\hline
\hline
\multirow{2}{*}{GSVM} & iMLCA & 100\% (0.01) & 6.6 (0.11) \\ 
\cline{2-4}
 & iMLCA-K & 100\% (0.00) & 7.6 (0.09) \\
\hline
\hline
\multirow{2}{*}{LSVM} & iMLCA & 99.6\% (0.10) & 3.5 (0.11) \\
\cline{2-4}
 & iMLCA-K & 99.7\% (0.09) & 3.2 (0.13) \\
\hline 
\hline
\multirow{2}{*}{MRVM} & iMLCA & 96.2\% (0.15) & 26.4 (0.51) \\
\cline{2-4}
 & iMLCA-K & 95.5\% (0.16) & 35.4 (0.29) \\
\hline 
\end{tabular} 
\caption{Effect of iMLCAs' new price generation compared to a version of iMLCA using a price generation procedure inspired by \citet{kwasnica2005new} (iMLCA-K). Standard errors in parentheses. Averages over 100 instances (50 for MRVM).}
\label{tbl:effort}
\end{table}

\subsection{Bidder Incentives}
\label{subsec:exp_bidder_incentives}

Next, we provide experimental evidence about the incentive properties of our mechanism.

\subsubsection{\IntervalReduction Activity Rule}
\label{subsec:exp_interval_reduction_activity_rule}

\begin{table}[b]
\begin{center}
\begin{tabular}[t]{crr}
Domain & \makecell{True\\Manipulability\\Bound} & \makecell{Observable\\Manipulability\\Bound} \\ 
\hline 
\hline 
GSVM & 0.1\% (0.01)
 & 0.2\% (0.01)		 \\ 
\hline 
LSVM & 0.0\% (0.00)
 & 0.0\% (0.01) \\ 
\hline 
MRVM & 0.0\% (0.00)	
 & 0.0\% (0.01)	 \\ 
\end{tabular}
\caption{
The \emph{true manipulability bound} reports the relative difference of the lower bound and true value.  
The \emph{observable manipulability bound} reports the relative difference of the lower and upper bound.  Both measures total over the final allocation. Standard errors in parentheses. Mean of 100 instances (50 for MRVM).}
\label{tbl:eff_gap}
\end{center}
\end{table}

As discussed in Section \ref{subsec:incentives}, to prevent manipulations in the main economy, it is important that the reported lower bounds for the final allocation are close to the true values. While the auctioneer has no knowledge of the size of this interval, she can control the observed interval size (i.e., the difference between upper and lower bounds) via the parameter $\varepsilon\textsuperscript{stop}$. 
We report both intervals in Table~\ref{tbl:eff_gap}. The relative \textit{true manipulability bound} reports the tightness of the lower bound with respect to the true value, i.e. $v(\underline{a})-\underline{v}(\underline{a}))/v(\underline{a})$, while the relative \textit{observable manipulability bound} is  $\overline{v}(\underline{a})-\underline{v}(\underline{a}))/\overline{v}(\underline{a})$.
We see that, in all domains, the true manipulability bound is $\leq 0.1\%$, which means that the mechanism leaves no practically meaningful opportunities for manipulating the bounds of the final allocation.

\subsubsection{Bound Manipulation Experiments}
\label{subsubsec:exp_manipulation}

Given the interval-based nature of iMLCA, bidders now have a new potential manipulation path that was not present in MLCA: the ability to strategically manipulate their bounds. To support our incentive analysis (see Section~\ref{subsec:incentives}), we ran experiments to explore this potential manipulation. In these experiments, we hold all bidders' bound-reporting strategies fixed (as described in Section~\ref{sec:experiments}), other than a single deviating bidder. We then evaluate each of the following bound generation heuristics for this deviating bidder: \emph{Exact Values}, where the focal bidder reports exact values; \emph{Standard Bounds}, where the focal bidder uses the standard bound-reporting strategy where the random noise in the initial bounds is created with a standard deviation of $0.5$; \emph{Wider Bounds}, where the deviating bidder instead uses a standard deviation of $1$; and \emph{Maximum Bounds}, where the deviating bidder reports our numerical equivalent to $[0,\infty]$.

\begin{table}[tb]
\centering
\resizebox{\textwidth}{!}{%
\begin{tabular}{p{3cm}||p{0.8cm}p{1.7cm}p{1cm}|p{.8cm}p{1.7cm}p{1cm}|p{.8cm}p{1.7cm}p{1cm}}
 & \multicolumn{3}{c|}{Local} & \multicolumn{3}{c|}{Regional} & \multicolumn{3}{c}{National} \\
\hhline{~---------}
{} & Social Welfare & Marginal Economy Welfare
(at Lower Bounds) &      Utility & Social Welfare & Marginal Economy Welfare
(at Lower Bounds) &       Utility & Social Welfare & Marginal Economy Welfare
(at Lower Bounds) &           Utility \\
\hhline{=::=========}
Standard Bounds (SD=0.5) &     9842 (129) &                      9841 (129) &  0.09 (0.08) &     9842 (129) &                      9822 (128) &  18.63 (5.27) &     9842 (129) &                      7468 (152) &  2373 (110) \\
\hhline{-||---------}
Exact Values (SD=0)      &     9814 (126) &                      9813 (126) &  0.04 (0.03) &     9830 (124) &                      9806 (123) &  21.91 (6.30) &     9733 (125) &                      7384 (156) &  2345 (116) \\
\hhline{-||---------}
Wider Bounds (SD=1)      &     9825 (131) &                      9824 (131) &  0.08 (0.08) &     9841 (129) &                      9821 (128) &  18.71 (5.60) &     9830 (127) &                      7441 (158) &  2388 (109) \\
\hhline{-||---------}
Maximum Bounds (0, Inf)  &     9814 (127) &                      9813 (127) &  0.00 (0.00) &     9804 (129) &                      9791 (129) &  12.19 (4.00) &     9815 (127) &                      7457 (150) &  2357 (104) \\
\hhline{-||---------}
ANOVA p-value                  &       0.998 &                        0.998 &     0.691 &       0.996 &                        0.998 &      0.637 &       0.931 &                         0.981 &          0.993 \\
\end{tabular}
}
\caption{Results for different bound reporting strategies for one bidder of each reported type in MRVM. Values in Millions. Standard errors in parentheses. Averages over 50 instances.}
\label{tbl:iMLCA-stability-mrvm}
\end{table}

We report the results of these experiments for the MRVM domain in Table~\ref{tbl:iMLCA-stability-mrvm}, where we observe that bidders have limited ability to improve their utility by strategically manipulating their reported bounds. Across the various deviation strategies we found negligible differences in social welfare, marginal economic welfare, and bidder utility.
To confirm these findings statistically we conducted one-way ANOVA tests, as reported in the last row of the table. The high p-values obtained for bidder utility (all greater than 0.637) strongly suggest that it's  unlikely for a bidder to benefit significantly from such manipulation in the MRVM domain. Similar results were observed for the GSVM and LSVM domains, which we defer to \appref{app:manipulationResults} along with a more detailed description of the various bidding strategies investigated.

\section{Discussion}
\label{sec:discussion}

Our experimental evidence provides strong support for the efficacy of iMLCA, achieving the same efficiency as MLCA but with lower bidder effort. Specifically, using linear prices to drive refinement over a restricted set of bundles in iMLCA appears to be as effective as the exact value queries in MLCA.

Our experimental results on the manipulability of iMLCA require care in interpretation.  These experiments do not evaluate the efficacy of non-truthful behavior \textit{in equilibrium}, but rather the efficacy of a unilateral deviation from our baseline 
strategy. We believe that these experiments are a valuable second-best, given that a full equilibrium analysis of a practical iterative combinatorial auction (ICA) for large settings is currently out of reach.  

In our bound manipulation experiments, we have only studied bidders strategically reporting narrower or wider bounds, but we have not simultaneously considered letting bidders report intervals that imply a value misreport (e.g., considering the set of manipulation strategies analyzed by \citet{brero2019machine}). Studying the cross-product of value misreporting strategies and bound manipulation strategies is computationally very challenging, but would be interesting future work.  

In our experiments, iMLCA exhibits lower revenue than MLCA and the CCA. This is not surprising, given that our activity rules are designed to drive more elicitation in the main economy than in the marginal economies. Thus, the reported bounds in the marginal economy allocations may remain looser, resulting in smaller VCG payments (Line~\ref{alg_line:out_end} of Algorithm~\ref{alg:imlca}). For the spectrum auctions that motivate this work it is efficiency, not revenue, that is the design goal.
But for other settings, revenue could be important. Thus, in future work, the activity rules could be altered to seek additional refinement in the marginal economies, which would raise revenue.

\section{Conclusion}

In this paper, we have developed a new ML-powered ICA with interval bidding (iMLCA), which avoids the high costs associated with reporting exact values. Our experiments have demonstrated that iMLCA achieves the same efficiency as MLCA, and outperformed the CCA in the realistically-sized MRVM domain. This suggests iMLCA as a practical mechanism to conduct large ICAs. We emphasize that our goal was not to beat MLCA, but rather to offer comparable performance while requiring less information from the bidders (intervals instead of exact values).

On a technical level, our main contribution was the careful integration of a price-based refinement process with an ML-powered query module in one auction phase. To address incentives, we highlight our new \ConvergencePhase that enables the auctioneer to make a deliberate trade-off between elicitation effort and robustness to manipulation.

We have used one specific measure of bidder effort, namely the relative size of the reported intervals. One could also consider other measures (e.g., from the behavioral economics literature), including non-linear measures capturing that narrowing already small bounds may be harder than narrowing loose bounds. One interesting question for future work would be to investigate how iMLCA could be modified to directly target such non-linear bidder effort measures. It would be interesting to complement these efforts with lab experiments, studying how bidders perceive iMLCA compared to other auction designs. 

The recent FCC Incentive Auction \citep{connolly2018fcc}, highlights that there are important settings that generalize combinatorial auctions to combinatorial double auctions and combinatorial exchanges.  The ICE mechanism \citep{Lubin.2008}, which was inspiration for several of iMLCA's design choices, supports exchange environments.  It is thus natural and of real-world interest to explore the potential of generalizing iMLCA to accommodate combinatorial double auctions and exchanges, thereby expanding its applicability to a wider range of settings.

\clearpage
\section*{Appendices}
\appendix
\section{Compute \texorpdfstring{$\delta$}{Delta}-approximate Clearing Prices at reports}
\label{app:deltaApproxClearing}
\section*{} 
\begin{algoprocedure}%
\textsc{Compute $\delta$-approximate Clearing Prices at reports}\label{proc:prices}\\
Solve a sequence of linear programs to find $\pi$.  First:
\begin{align}
\delta = & \min \max_{i,k} \delta_{ik} \label{eq:min_delta} &\\
\mathrm{s.t.} \, &
\hat{v}_i(a_i^\alpha) \!-\! \pi(a_i^\alpha) \!+\! \delta_{ik} \geq \hat{v}_i(x_{ik})\!-\!\pi(x_{ik}) \forall i,k & \label{eq:delta_constraint}\\
&\pi_j = 0 \quad \forall j \in M : \nexists i \in N : a^\alpha_{ij} = 1&
\label{eq:zero_prices}
\end{align}
Where $\delta_{ik}$ measures violation of the clearing condition for 
the $k$th report of bidder $i$.  Thus, the overall maximal negative utility of any bidder given prices $\pi$ is $\delta$.  Next solve:
\begin{align}
\min &|\{\delta_{ik} : \delta_{ik} > 0 \}| & \label{eq:minimize_number_of_positive_deltas} \\
\mathrm{s.t.} \, & \eqref{eq:delta_constraint} \text{ and } \eqref{eq:zero_prices} \\
&\delta_{ik} \leq \delta \quad \forall i,k & \label{eq:fix_maximal_delta}
\end{align}
which breaks ties to minimize the number of positive $\delta_{ik}$.
\end{algoprocedure}

\section{Modeling Bidder Behavior - Refinement}
\label{app:bidderRefinement}

We implemented two straightforward heuristics that adjust bounds according to \refMRPARActivityRule and \refIntervalReductionActivityRule, respectively.  For the former, we run the \emph{\MRPAR heuristic} (Algorithm~\ref{alg:mrpar_heuristic}), for the latter, we run the \emph{\IntervalReduction activity rule heuristic} (Algorithm~\ref{alg:interval_reduction_heuristic}). The \MRPAR heuristic reflects that it is typically harder for a bidder to narrow a bound close to their exact value for a bundle than it is to narrow a bound far from their true value.  In both heuristics, we aim to minimize the amount of refinement, such that bidders only refine the bounds as much as is needed to comply with the activity rules.

\newpage
\begin{algorithm}[H]
\caption{MRPAR Heuristic}
\label{alg:mrpar_heuristic}
\SetAlgoLined
\DontPrintSemicolon
\Parameter{provisional allocation $a^\alpha$; linear prices $\pi$; $R_i$ the reports of bidder $i$}
\Comment*[l]{Identify $\hat{x}$ the bundle with the highest utility}
\Comment*[l]{Break ties in favour of $a_i^\alpha$}
$\hat{x} \in \argmax_{x \in R_i} v_i(x) - \pi(x)$\;
\Comment*[l]{Identify $\hat{\hat{x}}$ the bundle with the second highest utility}
$\hat{\hat{x}} \in \argmax_{x \in R_i \setminus \{\hat{x}\}} v_i(x) - \pi(x)$\;
\Comment*[l]{Draw a break point utility $\hat{u}$}
\Comment*[l]{D(x,y) is a (capped) bell-shaped distribution which draws random from the range [x,y]}
$\hat{u} \sim D(v_i(\hat{\hat{x}}) - \pi(\hat{\hat{x}}), v_i(\hat{x}) - \pi(\hat{x})$\;
\Comment*[l]{Make sure refinement bounds are not tightened more than needed}
\uIf{$\max_{x\in R_i \setminus \{\hat{x}\}}\overline{v}_i(x)-\pi(x) < \hat{u}$}{
$\hat{u} = \max_{x\in R_i \setminus \{\hat{x}\}}\overline{v}_i(x)-\pi(x)$\;
}
\uIf{$\hat{u} < \underline{v}_i(\hat{x}) - \pi(\hat{x})$}{
    $\hat{u} = \underline{v}_i(\hat{x}) - \pi(\hat{x})$
}
\Comment*[l]{Make sure strict inequality is met in case $\hat{x} \neq a_i^\alpha$}
$\eta = 0$\;
\uIf{$\hat{x} \neq a_i^\alpha$ and $\underline{v}_i(\hat{x}) - \pi(\hat{x})\leq \max_{x\in R_i \setminus \{\hat{x}\}}\overline{v}_i(x)-\pi(x)$}{ set $\eta$ to a very small number}
\Comment*[l]{Adjust bounds}
$\underline{v}_i(\hat{x}) = \min \{v_i(\hat{x}), \hat{u} + \pi(\hat{x}) + \eta \} \big\}$\;
\lForEach{$x \in R_i \setminus \{\hat{x}\}$} {
$\overline{v}_i(x) = \max \big\{v_i(x), \min \{\overline{v}_i(x), \hat{u} + \pi(x) - \eta \} \big\}$ 
}
\KwRet{$R_i$}\;
\end{algorithm}

\begin{algorithm}[H]
\caption{\IntervalReduction activity rule heuristic}
\label{alg:interval_reduction_heuristic}
\SetAlgoLined
\DontPrintSemicolon
\Parameter{bundle to refine $x$; parameter $\varepsilon$; $R_i$ the reports of bidder $i$}
\Comment*[l]{Draw $z$ to identify which part of $\varepsilon$ is above the true value}
\Comment*[l]{D(x,y) is a (capped) bell-shaped distribution which draws random from the range [x,y]}
$z \sim D(0,1)$\;
\Comment*[l]{Generate new upper bound such that the report is consistent (i.e. the new upperbound is weakly lower than the old one and according to $z$. If the interval induced by $z$ would be too small (i.e. the lowerbound is already tighter to the true value than induces by $z$) set the upperbound such that the relative interval is $\varepsilon$ according to the existing lowerbound report.}
$\overline{v}_i(x) = \min\{\max\{\frac{v_i(x)}{1-z\varepsilon}, \frac{\underline{v}_i(x)}{1-\varepsilon} \}, \overline{v}_i(x)\}$ \;
\Comment*[l]{Generate new lower bound such that the report is consistent (i.e. the new lower bound is weakly higher than the old one and the (relative) interval size matches given $\varepsilon$. }
$\underline{v}_i(x) = \max\{\overline{v}_i(x)(1-\varepsilon), \underline{v}_i(x)\}$

\KwRet{$R_i$}\;
\end{algorithm}

\section{Distribution of Interval Size}
\label{app:intervalSizeDistribution}

In Figure~\ref{fig:interval_size} we provide a comparison of the interval size distribution between the initial and final reports in the MRVM domain.  We see that while the reported intervals of the final implemented allocation are almost tight, most others remain loose.  Specifically, at the end of the auction only $19\%$ of the reports have an interval size of $0\%$ to $10\%$ while $37\%$ of the reports have a interval size of more than $50\%$.

\captionsetup[sub]{font=footnotesize,labelfont={bf,sf}}
\begin{figure}[H]
\centering
\begin{subfigure}{.42\textwidth}
  \centering
  \includegraphics[width=1\linewidth]{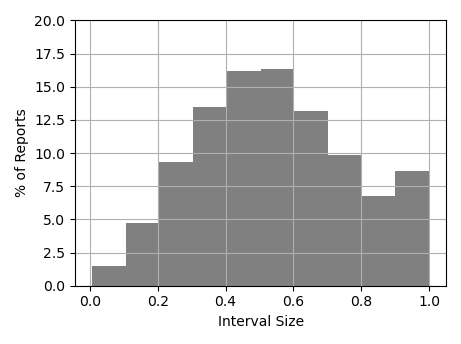}
  \caption[font=footnotesize]{Initial Interval Size}
  \label{fig:initial_uncertainty}
\end{subfigure}%
\begin{subfigure}{.42\textwidth}
  \centering
  \includegraphics[width=1\linewidth]{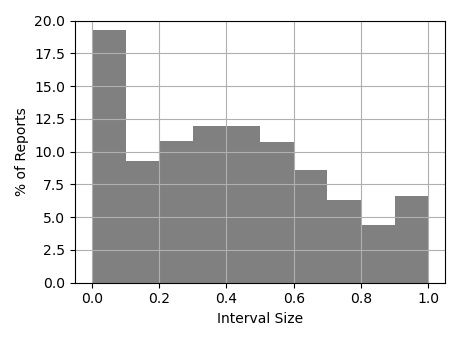}
  \caption{Final Inverval Size}
  \label{fig:final_uncertainty}
\end{subfigure}
\caption{Comparison of the reported interval size distribution after the initialization phase and at the end of the auction.}
\label{fig:interval_size}
\end{figure}

\section{Additional Manipulation Experiments}
\label{app:manipulationResults}

To support our incentive analysis (see section \ref{subsec:incentives}), we ran experiments to explore if a bidder can improve their utility by strategically reporting their bounds.  In these experiments we hold all bidders' bound-reporting approach fixed at our standard procedure (as described in \ref{sec:experiments}), other than a single deviating bidder.  We then evaluate each of the following bound generation heuristics for this deviating bidder:\footnote{All of these heuristics are 'safe' in that they the bidder's true value always lies within their reported bounds}  

\begin{description}
\item[Exact Values (SD=0)] The deviating bidder reports exact values, i.e. for each report for bundle $x$ we have $\underline{v}(x) = \overline{v}(x) = v(x)$. 

\item[Standard Bounds (SD=0.5)] The bidder simply uses the standard approach, as described in Section~\ref{sec:experiments}. Here, SD refers to the standard deviation used when drawing the noise used to generate bounds and we use our baseline level of $SD=0.5$.

\item[Wider Bounds (SD=1)] The deviating bidder generates bounds where the interval size is in expectation doubled (i.e. each initial report for bundle $x$ we have $\overline{v}(x) - v(x)$ and $v(x)-\underline{v}(x)$ is in expectation twice as large as in the standard bound generation procedure.

\item[Maximum Bounds (0,Inf)] The deviating bidders initial bounds for each bundle is supposed to be $[0,\infty]$. For our simulation we use 1.5 times social welfare to represent $\infty$ for numerical reasons.
\end{description}

The results are shown in Table~\ref{tbl:iMLCA-stability-mrvm} for MRVM (in the main text), Table~\ref{tbl:iMLCA-stability-gsvm} for GSVM (herein), and Table~\ref{tbl:iMLCA-stability-lsvm} for LSVM (herein). For each domain we tested deviations for all bidder types and report the resulting social welfare at truth, the marginal economy welfare for the deviating bidder at reported lower bounds and the utility of the deviating bidder.\footnote{The prices are computed based on reported lower bounds such that the utility cannot exactly be decomposed from the other two reported measures.}  

We observe very little difference in social welfare, marginal economy welfare or utility under any of the deviating strategies.  To formalize this, in the last row we show the p-value of the one-way ANOVA to test for statistically significant differences across all applied strategies for each column respectively.

Focusing on utility, we observe very high p-values (in all cases $>0.637$) for the utility in all domains and therefore can conclude that it is very unlikely that a bidder can improve their utility by using these strategies. A similar argument follows for social welfare and marginal economy welfare.

\begin{table}[H]
\centering
\resizebox{\textwidth}{!}{%
\begin{tabular}{l||p{1.7cm}p{1.7cm}p{1.7cm}|p{1.7cm}p{1.7cm}p{1.7cm}}
 & \multicolumn{3}{c|}{Regional} & \multicolumn{3}{c}{National} \\
 \hhline{~------}
 & Social Welfare & Marginal Economy Welfare
(at Lower Bounds) &       Utility & Social Welfare & Marginal Economy Welfare
(at Lower Bounds) &      Utility \\
\hhline{=::======}
Standard Bounds (SD=0.5) &    437.5 (3.6) &                     412.1 (3.7) &  25.19 (2.25) &    437.5 (3.6) &                     432.4 (4.1) &  4.78 (1.35) \\
\hhline{-||------}
Exact Values (SD=0)      &    437.5 (3.6) &                     411.8 (3.7) &  25.48 (2.29) &    437.5 (3.6) &                     431.7 (4.2) &  5.47 (1.62) \\
\hhline{-||------}
Wider Bounds (SD=1)      &    437.4 (3.6) &                     412.6 (3.6) &  24.55 (2.19) &    437.1 (3.6) &                     432.5 (4.1) &  4.37 (1.41) \\
\hhline{-||------}
Maximum Bounds (0, Inf)  &    437.4 (3.6) &                     414.6 (3.6) &  22.56 (2.07) &    436.5 (3.7) &                     432.6 (4.1) &  3.65 (1.35) \\
\hhline{-||------}
ANOVA p-value                  &       1.000 &                        0.944 &      0.787 &       0.997 &                        0.999 &     0.841 \\
\end{tabular}
}
\caption{Results for different bound reporting strategies for one bidder of each reported type in GSVM. Standard errors in parentheses. Averages over 100 instances.}
\label{tbl:iMLCA-stability-gsvm}
\end{table}

\begin{table}[H]
\centering
\resizebox{\textwidth}{!}{%
\begin{tabular}{l||p{1.7cm}p{1.7cm}p{1.7cm}|p{1.7cm}p{1.7cm}p{1.7cm}}
 & \multicolumn{3}{c|}{Regional} & \multicolumn{3}{c}{National} \\
\hhline{~------}
{} & Social Welfare & Marginal Economy Welfare
(at Lower Bounds) &       Utility & Social Welfare & Marginal Economy Welfare
(at Lower Bounds) &       Utility \\
\hhline{=::======}
Standard Bounds (SD=0.5) &    533.5 (4.5) &                     520.6 (6.0) &  12.88 (2.93) &    533.5 (4.5) &                     505.4 (4.6) &  28.04 (3.89) \\
\hhline{-||------}
Exact Values (SD=0)      &    532.8 (4.5) &                     520.3 (6.1) &  12.46 (2.93) &    533.4 (4.5) &                     505.5 (4.8) &  27.89 (3.59) \\
\hhline{-||------}
Wider Bounds (SD=1)      &    534.1 (4.4) &                     520.4 (6.1) &  13.67 (3.32) &    533.7 (4.4) &                     507.2 (4.6) &  26.46 (3.66) \\
\hhline{-||------}
Maximum Bounds (0, Inf)  &    533.2 (4.5) &                     520.9 (6.0) &  12.35 (2.93) &    533.0 (4.5) &                     510.1 (4.3) &  22.81 (3.19) \\
\hhline{-||------}
ANOVA p-value                  &       0.998 &                        1.000 &      0.990 &       1.000 &                        0.875 &      0.714 \\
\end{tabular}
}
\caption{Results for different bound reporting strategies for one bidder of each reported type in LSVM. Standard errors in parentheses. Averages over 100 instances.}
\label{tbl:iMLCA-stability-lsvm}
\end{table}

\bibliographystyle{plainnat}
\bibliography{iMLCA}    

\end{document}